\documentclass[conference]{ieeeconf}
\IEEEoverridecommandlockouts

\usepackage{amsmath,amssymb,amsfonts}
\allowdisplaybreaks

\usepackage{amsthm}
\usepackage{algorithm,algorithmic}
\usepackage{comment}
\usepackage{mathtools}
\usepackage{interval}
\usepackage{algorithmic}
\usepackage{float}
\usepackage{graphicx}
\usepackage{textcomp}
\usepackage{xcolor}
\usepackage[sort,compress]{cite}

\usepackage{hyperref}
\usepackage{cleveref}
\crefname{lemma}{Lemma}{Lemmas}
\crefname{proposition}{Proposition}{Propositions}
\crefname{theorem}{Theorem}{Theorems}
\crefname{corollary}{Corollary}{Corollaries}
\crefname{definition}{Definition}{Definitions}
\crefname{remark}{Remark}{Remarks}

\def\BibTeX{{\rm B\kern-.05em{\sc i\kern-.025em b}\kern-.08em
    T\kern-.1667em\lower.7ex\hbox{E}\kern-.125emX}}

\theoremstyle{definition} 
\newtheorem{theorem}{Theorem}
\newtheorem{lemma}{Lemma}
\newtheorem{proposition}{Proposition}
\newtheorem{corollary}{Corollary}

\DeclareMathOperator{\diam}{diam}
\DeclareMathOperator{\Prob}{Prob}
\DeclareMathOperator{\dlyap}{dlyap}
\DeclareMathOperator{\tr}{tr}

\DeclareMathOperator{\vect}{vec}
\DeclareMathOperator{\op}{op}
\DeclareMathOperator{\sgd}{sgd}
\DeclareMathOperator{\gd}{gd}

\begin{document}

\title{\LARGE \bf On Globally Optimal Stochastic Policy Gradient Methods for Domain Randomized LQR Synthesis

}

\author{Alex Nguyen-Le and Nikolai Matni
\thanks{A. Nguyen-Le and N. Matni are with the Department of Electrical and Systems Engineering, University of Pennsylvania. Emails: \texttt{\{atn,nmatni\}@seas.upenn.edu}}%
}

\maketitle

\begin{abstract}
Domain randomization is a simple, effective, and flexible scheme for obtaining robust feedback policies aimed at reducing the sim-to-real gap due to model mismatch. While domain randomization methods have yielded impressive demonstrations in the robotics-learning literature, general and theoretically motivated principles for designing optimization schemes that effectively leverage the randomization are largely unexplored. We address this gap by considering a stochastic policy gradient descent method for the domain randomized linear-quadratic regulator synthesis problem, a situation simple enough to provide theoretical guarantees. In particular, we demonstrate that stochastic gradients obtained by repeatedly sampling new systems at each gradient step converge to global optima with appropriate hyperparameters choices, and yield better controllers with lower variability in the final controllers when compared to approaches that do not resample. Sampling is often a quick and cheap operation, so computing policy gradients with newly sampled systems at each iteration is preferable to evaluating gradients on a fixed set of systems.
\end{abstract}

\section{Introduction}
Domain randomization (DR) has gained wide adoption in reinforcement learning and robot learning for enabling sim-to-real transfer \cite{TFRSZA17}. By harnessing massive GPU parallelism, DR trains controllers across randomized environments to robustify controllers against model uncertainty by minimizing expected cost over a sampled uncertainty distribution. In contrast, control theoretic approaches typically handle uncertainty via robust control, designing for the worst case over the uncertainty set. While the robust control methods come with theoretical guarantees that are largely absent in DR, robust control methods frequently suffer from over-conservatism due to their pessimistic formulation. DR trades these theoretical guarantees for straightforward implementation and effective use of GPU computing resources, which are major contributor to DR’s practical success.

An intermediate formulation between DR and robust control, introduced by Fujinami et al.~\cite{FLMP25A, FLMP25B}, is the domain‑randomized linear‑quadratic regulator (DR‑LQR), which retains DR’s ease of implementation while adding stability guarantees. The only missing benefit of DR in Fujinami et al. \cite{FLMP25A,FLMP25B} is the effective utilization of available compute resources, which we address in our work. While Fujinami et al. can scale their policy gradient method with available GPU compute, their approach is inherently limited by the surrogate cost function employed instead of the DR cost function, which we directly optimize with stochastic gradient descent (SGD) methods.


\subsection{Related Work: LQR Policy Gradient Methods}
The seminal work in applying policy gradient methods to LQR synthesis is Fazel et al. \cite{FGKM18}, which established that the policy gradient satisfies a gradient-domination condition \cite{KNS16} that can find global optimal solutions to (LQR) synthesis problem despite nonconvexity in the cost-function and feasible set. This has been greatly refined and extended over the past decade; many other optimal control synthesis problems also support globally convergent policy optimization, notably $H_{2}/H_{\infty}$ and $H_{\infty}$ control synthesis \cite{ZHB21}, Markov jump linear system control synthesis \cite{JHD22}, and even to the reinforcement learning settings \cite{ZHB20}. Despite the impact of Fazel et al. \cite{FGKM18} in the control-literature, many extensions to LQR-synthesis problems possess convex-optimization formulations so policy-gradient methods are not the only techniques that can produce optimal controllers\cite{HZLMFB23}. A notable exception is that of the domain randomized LQR (DR-LQR) problem~\cite{FLMP25A, FLMP25B} where Fujinami et al.~\cite{FLMP25B} show that gradient descent applied to a surrogate sample-average cost converges to a nearly optimal controller.  In contrast, we show that an optimal controller can be found by applying stochastic gradient descent directly to the DR-LQR objective.


\subsection{Related Work: Domain Randomization}
Domain randomization, popularized in the reinforcement learning and robotics literature~\cite{TFRSZA17, PAZA18}, randomizes simulator parameters towards reducing the sim-to-real gap. While the parameters chosen for randomization are highly task dependent, tasks employing dynamics simulations typically vary physical constants like masses, geometries, or scaling factors \cite{PAZA18, CHMMIRF19, LKRDKS19}. Similarly, when vision is an important component for solving the task at hand, randomization is typically employed over features like textures, lighting, or field-of-view of the vision system \cite{TFRSZA17}\cite{W19}. This randomization is typically employed by sampling new parameters at each training episode, as opposed to fixed at the outset as in predecessors to DR, like the scenario-approach \cite{CGP09,TDC05} from robust control or ensemble randomization techniques from robot learning\cite{LKDRT18}\cite{MLT15}. This ability to sample and roll out randomized environments in parallel is a key factor in the success of domain randomization: we show how this capability can be leveraged to efficiently compute an optimal DR-LQR controller. 

\subsection{Main Contributions}
\begin{itemize}
    \item \textbf{Stochastic Policy Gradient Convergence:} We establish the first proof of linear convergence of minibatched SGD to a $\varepsilon$-suboptimal solution of the DR-LQR problem.  We further characterize sufficient conditions on the amount of samples a minibatch should containas a function of this suboptimality level.
    \item \textbf{Computational Advantage:} We empirically show that, for a fixed computational budget, solving the DR-LQR problem via SGD yields a policy that outperforms a controller obtained from the surrogate sample-average formulation of~\cite{FLMP25B}. 
\end{itemize}

\section{Problem Description}
Consider a fully observed linear dynamical system, with state $x_t \in \mathbb{R}^{n_x}$, input $u_t \in \mathbb{R}^{n_u}$, and noise $w_t \in \mathbb{R}^{n_x}$. The system evolves as
$$
x_{t+1} = Ax_t + Bu_t + w_t
$$
where $w_t \sim \mathcal{N}(0,\Sigma_w)$ and we assume that $\Sigma_w \succeq I$. For fixed system parameters, $\theta := [A,B]$ the infinite-horizon stochastic LQR cost is given by,
\begin{align}
J(K,\theta) &= \limsup_{T\rightarrow \infty} \mathbb{E}_w^K\left[\frac{1}{T}\sum_{t=0}^{T} \begin{bmatrix}x_t\\u_t\end{bmatrix}\begin{bmatrix}Q & S\\S^\top & R\end{bmatrix}\begin{bmatrix}x_t\\u_t\end{bmatrix}\right]\nonumber\\
&= \tr(P(K,\theta)\Sigma_w) \label{eq:lqr-cost}
\end{align}
where $P(K,\theta)$ is the cost-to-go matrix and satisfies \cite{B12},
$$
P(K,\theta) = \dlyap((A\!-\!BK)^\top, Q \!-\! SK \!-\! S^\top K^\top \!+\! K^\top RK)\\
$$
and,
$$
\dlyap(A-BK,X) = \sum_{\ell = 0}^{\infty} (A-BK)^\ell X (A-BK)^{\ell \top}.
$$
Further, the superscript $K$ indicates that the expectation is taken under the closed-loop policy $u_t=-Kx_t$ with respect to the noise process $w_t$. We also assume that the quadratic cost parameter $\mathcal{Q} \coloneqq \begin{bmatrix}Q & S\\S^\top & R\end{bmatrix}$ satisfies $\mathcal{Q} \succeq I$.

We model the system parameters $\theta$ as random variables supported on a convex and compact set $\Theta$ with density $p_\Theta$, and define the \emph{domain randomized linear-quadratic-regulator} \eqref{eq:dr_lqr} synthesis problem as seeking a feedback gain $K$ which minimizes the expecation of the LQR cost~\eqref{eq:lqr-cost} with respect to the system parameters:
\begin{equation}\label{eq:dr_lqr} \tag{DR-LQR}
\begin{aligned}
    \begin{array}{cl}
    \underset{K}{\text{minimize}} & \mathbb{E}_\theta[J(K,\theta)] \\
     \text{subject to:}& \displaystyle \sup_{\theta \in \Theta } J(K,\theta) < \infty
    \end{array}
\end{aligned}
\end{equation}
We define $J_{DR}(K) := \mathbb{E}_\theta[J(K,\theta)]$ and drop the expecation subscript, as all expectations will be with respect to $\theta$, i.e., $J_{DR}(K)=\mathbb{E}[J(K,\theta)]$.

Solving problem~\eqref{eq:dr_lqr} is challenging because both the objective and constraint set is nonconvex, and neither the objective nor its gradient can be exactly computed.  Nevertheless, we show that we can solve \eqref{eq:dr_lqr} via the minibatched SGD scheme in Algorithm \ref{alg:dr-mba}. In \S~\ref{sec:gd}, we first show that despite the nonconvexity of the underlying problem, gradient descent converges to a global optimum as the objective function enjoys a \emph{gradient dominance} property. In \S~\ref{sec:sgd}, we then show that stochastic gradient descent implemented with approximate gradients estimated via minibatch sampling also converge to a globally optimal solution, albeit at a degraded rate.

\begin{algorithm}[H]
    \caption{DR Minibatched Stochastic Gradient Descent} 
    \label{alg:dr-mba}
    \begin{algorithmic}[1]
    \STATE \textbf{Input:} System Probability Distribution $p_\Theta(\cdot)$, Minibatch Size $M$, Step-Size, $\eta$, Number of Gradient Steps $t$, Starting Point $K_0$
    \STATE \textbf{Output:} K
    \STATE $K \gets K_0$
    \FOR {$n \in \{1,\ldots, N\}$}
    \STATE Sample $M$ systems from $p_\Theta$
    \vspace{-0.5\baselineskip}
    $$
        \theta_1,\ldots,\theta_M \sim p_{\Theta}(\theta)
    $$
    \vspace{-1.25\baselineskip}
    \STATE Compute the minibatched gradients
    \vspace{-.5\baselineskip}
    \begin{equation}\label{eq:MB-G}\tag{MB-G}
        g(K) \gets \frac{1}{M}\sum_{i = 1}^M \nabla J(K,\theta_i)
    \end{equation}
    \vspace{-0.75\baselineskip}
    \STATE Take a minibatched gradient step,
    \vspace{-.5\baselineskip}
    \begin{equation} \label{eq:GD}\tag{MB-SGD}
        K \leftarrow K - \eta g(K) 
    \end{equation}
    \ENDFOR
    \end{algorithmic}
\end{algorithm}
Proofs of all intermediate results, as well as exact expressions for constants and polynomials appearing in the sequel can be found in the appendix of our extended version available.
\section{Convergence Analysis of Gradient Descent for DR-LQR}
\label{sec:gd}


We begin with the idealized setting in which the exact gradient $\nabla J_{DR}(K)$ of the DR-LQR objective is available. Even in this case, establishing convergence of gradient descent to a global optimum is nontrivial since the DR-LQR cost function is nonconvex in the feedback gain $K$. This difficulty already arises in the LQR problem parameterized by the control policy~\cite{FGKM18}, where global convergence of policy gradient methods is shown by exploiting a \emph{gradient dominance property} and suitable smoothness conditions. In particular, in addition to establishing gradience dominance, smoothness is needed to ensure that each update  
\[
K_{\ell+1} = K_\ell - \eta \nabla J_{DR}(K)
\]  
remains \emph{stabilizing}, i.e., in the feasible set of gains. 

In the standard LQR setting,  Hu et al.~\cite{HZLMFB23} address this challenge through a modified smoothness argument: by restricting attention to line segments between feasible policies, they establish a form of local smoothness that guarantees cost decrease under an appropriately chosen step size. Here, we extend this reasoning to the DR-LQR setting by showing that the DR-LQR cost function is coercive, $L$-smooth (i.e., has $L$-Lipschitz gradients) on any sublevel set, and satisfies a (local) gradient dominance property near the optimal solution. To show convergence, we also assume that an initial controller $K_0$ is available which simultaneously stabilizes all systems in $\Theta$, i.e, that $K_0 \in \mathcal{K}_\Theta:=\{K\, : \, \sup_{\theta\in\Theta} J(K,\theta)<\infty\}$. 




\subsection{DR-LQR Cost is Coercive} We note that for any $\theta\in\Theta$, the LQR cost $J(K,\theta)$ is coercive over any sub-level set $\{K \, : \, J(K,\theta)\leq c\}$ for $c<\infty$~\cite[Lemma 3.7]{BMFM19}.  A minor modification of Lemma 3.7 in \cite{BMFM19} shows that $J(K,\theta)$ satisfies the uniform in $\theta$ quadratic lower bound $J(K,\theta)\geq \lambda_{\min}(\Sigma_w)\|K\|_F^2$.  Taking expectations of both sides then allows us to conclude coercivity of the DR-LQR cost $J_{DR}(K).$ 

\subsection{ DR-LQR Cost is Locally L-Smooth} 

%
The cost $J_{DR}(K)$ is not globally $L$-smooth because for any fixed $\theta\in\Theta$, the cost $J(K,\theta)$ rapidly becomes infinite at the boundary between stable and unstable controllers, violating traditional smoothness conditions~\cite{HZLMFB23}: hence we do not expect the DR-LQR cost $J_{DR}(K)$ to satisfy global smoothness either. Instead, we fix a performance level $c>o$, and show that within $c$-sublevel sets
$S_c \coloneqq \{ K : J_{DR}(K) \leq c\},$ local smoothness can be established.

\begin{lemma}[Policy-Smoothness]\label{lem:policy_smoothness} 
Fix $c>0$. The cost $\mathbb{E}[J(K,\theta)]$ is $L_K$-smooth at any $K\in S_c$, with $L_K$ given by
\begin{align}
    L_K:=&4\left(\left\Vert\mathcal{Q}\right\Vert_{\mathrm{op}} + \frac{2\big\Vert\bar{\theta} \big\Vert_{\mathrm{op}}^2 J_{DR}(K_0)}{\sigma_{\mathrm{min}}(\Sigma_w)} \right)\cdots \label{eq:LK_upper_bound}\\
    &\qquad\cdot\left(1 + \frac{4\Vert \bar{\theta}\big\Vert_{\mathrm{op}}^2J_{DR}(K_0)}{\sigma_{\mathrm{min}}(\Sigma_w)^2} \right)\frac{2J_{DR}(K_0)}{\sigma_{\mathrm{min}}(\mathcal{Q}) }\nonumber
\end{align}
Where the operator norm is given by the maximum singular value. Here,
\begin{align*}
    E(K,\theta) &= (R\!+\! B^\top P(K,\theta)B)K \!-\! B^\top P(K,\theta)A \!-\! S^\top\\
    \Sigma(K,\theta) &= \dlyap((A-BK)^\top,\Sigma_w)\\
    \bar{\theta} &= \sup_{\theta\in\Theta} \Vert [ A,B]\Vert_{\op}.
\end{align*}
We note that all quantities in \cref{eq:LK_upper_bound} can be bounded above by polynomials in problem specific parameters and $c$.
    \begin{proof}
        The Hessian of $J(K,\theta)$ is given in Bu et al. \cite{BMFM19}. Employing this Hessian, submultiplicativity, subadditivity, the shorthands introduced, and finally
        noting that, by Appendix \cref{cor:hessian_lipschitz_characterization}, $\|\mathbb{E}[D_K^2J(K,\theta)]\|_{\text{op}}\leq L_K$.
    \end{proof}
\end{lemma}

We next show that local $L$-smoothness can be used to select a step-size for gradient descent that ensures that the DR-LQR cost decreases at each iteration.  The following is an adaptation of Hu et al. \cite[Theorem 1]{HZLMFB23} to obtain the following feasibility lemma.

\begin{proposition}[Theorem 1 of Hu et al.~\cite{HZLMFB23}] \label{prop:implicit_regularization} Let $K_1,K_2 \in S_c$ be two arbitrary policies in the sublevel set $S_c$. Denote by
$\mathcal{I}(K_1,K_2) = \{\phi K_1 + (1-\phi)K_2: \phi \in [0,1]\},$
the line segment connecting $K_1,K_2$. If $\mathcal{I}(K_1,K_2) \subseteq S_c $, then for $L_K$ as given in equation~\eqref{eq:LK_upper_bound}, we have that
\begin{align}
J_{DR}(K_2) &\leq J_{DR}(K_1) + \langle \nabla J_{DR}(K_1), K_2-K_1\rangle\notag\\
&\qquad +\frac{L_K}{2}\Vert K_2-K_1\Vert_F^2 \label{eq:descent_lemma}
\end{align}
\end{proposition}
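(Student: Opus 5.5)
The plan is the standard descent-lemma argument: restrict $J_{DR}$ to the segment and integrate its derivative, using the Hessian bound of \cref{lem:policy_smoothness} along the way. Set $\Delta := K_2 - K_1$ and define the scalar function
$$
f(\phi) := J_{DR}(K_1 + \phi\Delta), \qquad \phi\in[0,1].
$$
By hypothesis $\mathcal{I}(K_1,K_2)\subseteq S_c$, so every point $K_1+\phi\Delta$ lies in $S_c$. In particular, every such point is a gain with $J_{DR}<\infty$.

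\textbf{Step 1 (smoothness along the segment).} First I will argue that each point of the segment stabilizes almost every $\theta$. This holds because $J(K,\theta)=\infty$ outside the stabilizing set, which must then have measure zero since $J_{DR}(K)\le c$. Each $J(\cdot,\theta)$ is real-analytic on its open stabilizing set. Because $\Theta$ is compact and the Hessian bound of \cref{lem:policy_smoothness} is uniform over $\theta$ and over $K\in S_c$, dominated convergence allows differentiation under the expectation twice. Hence $f\in C^2[0,1]$, with
$$
f'(\phi)=\langle \nabla J_{DR}(K_1+\phi\Delta),\Delta\rangle,\qquad f''(\phi)=\mathbb{E}\big[D_K^2J(K_1+\phi\Delta,\theta)[\Delta,\Delta]\big].
$$

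\textbf{Step 2 (bound on the curvature).} By \cref{lem:policy_smoothness}, specifically the bound $\|\mathbb{E}[D_K^2J(K,\theta)]\|_{\op}\le L_K$ valid at every $K\in S_c$, we obtain $f''(\phi)\le L_K\|\Delta\|_F^2$ for all $\phi\in[0,1]$. Here I will use that $L_K$ in \eqref{eq:LK_upper_bound} depends only on $J_{DR}(K_0)$ (equivalently on $c$) and on problem constants, so a single constant serves for the whole segment.

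\textbf{Step 3 (Taylor with integral remainder).} Write
$$
f(1)=f(0)+f'(0)+\int_0^1(1-s)f''(s)\,ds \le f(0)+f'(0)+\tfrac{L_K}{2}\|\Delta\|_F^2,
$$
which is exactly \eqref{eq:descent_lemma}.

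\textbf{Main obstacle.} The only delicate point is Step 1: justifying the interchange of expectation and differentiation, and ensuring that the segment does not cross a $\theta$-dependent instability boundary where the Hessian blows up. Membership of the whole segment in $S_c$ is precisely what rules this out, since a blow-up of $J(\cdot,\theta)$ on a set of $\theta$ of positive measure would force $J_{DR}=\infty$. I would first try to dominate the second derivatives by the uniform polynomial bound in $c$ noted after \cref{lem:policy_smoothness}.
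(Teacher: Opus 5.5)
Your argument is correct and is essentially the paper's route: the paper just cites Theorem~1 of Hu et al., which amounts to three steps. First, restrict $J_{DR}$ to the segment $\mathcal{I}(K_1,K_2)\subseteq S_c$; second, bound its curvature there by the single constant $L_K$ of \cref{lem:policy_smoothness}, which is uniform along the segment as you observe; third, integrate via a Taylor/integral-remainder expansion. The paper does not attempt your Step~1 and simply takes the $L_K$-smoothness asserted by \cref{lem:policy_smoothness} on all of $S_c$ as given, which is the right thing to lean on: finiteness of $J_{DR}$ alone would not control $\mathbb{E}[D_K^2J(K,\theta)]$, since the per-$\theta$ Hessian grows like a higher power of $J(K,\theta)$, so your \emph{Main obstacle} heuristic is not what actually closes that step.
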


We can apply this result to show that if $K_0\in S_c$, then all future iterates $K_\ell$ of gradient descent applied with step-size $\eta=L_K^{-1}$ remain in $S_c$. In particular, it holds that for any $K\in S_c$, the segment $\mathcal{I}(K,K-L_K^{-1}\nabla \mathbb{E}[J(K,\theta)]) \subseteq S_c$, so that by Prop.~\ref{prop:implicit_regularization}:
\begin{align*}
    &J_{DR}(K - L_K^{-1}\nabla_K J_{DR}(K)) \leq\\
    &\qquad\qquad J_{DR}(K) - \frac{1}{2L_K} \Vert \nabla_K J_{DR}(K)\Vert_F^2 \leq J_{DR}(K).
\end{align*}

\vspace{0.45\baselineskip}
This is called the \emph{implicit regularization property} by Hu et al. \cite{HZLMFB23}, and importantly ensures that all iterates of gradient descent remain in $\mathcal{K}_\Theta$ if the initial iterate $K_0\in\mathcal{K}_\Theta.$

\subsection{Gradient Dominance of DR-LQR Cost:} 
Fujinami et al.~\cite{FLMP25B} show that a sample average surrogate of the DR-LQR cost enjoys gradient dominance in a neighborhood of the optimal policy under suitable heterogeneity assumptions, as captured by $\diam(\Theta)$.


We show here that this argument can be extended to the true DR-LQR cost function $J_{DR}(K)$ under analogous assumptions.  Our strategy is to first show that the DR-LQR cost satisfies an approximate gradient dominance property, then argue that this guarantees gradient dominance in a suitable neighborhood of the optimal controller $K^\star.$

\begin{lemma}[Approximate Gradient-Domination] Fix $c>0$ and $K\in S_c$. Then the following bound holds:
\begin{align*}
&J_{DR}(K)-J_{DR}(K^\star) \\
&\qquad\leq 2 \left(\Vert  \nabla_K J_{DR}(K) \Vert_F^2 + 2\Vert\mathbb{E}[ R(K,K^\star,\theta)]  \Vert_F^2\right)
\end{align*}
where
    $$
    R(K^\star,K,\theta) = E(K,\theta)(\Sigma(K,\theta)- \Sigma(K^\star,\theta)),
    $$
    \begin{proof}
    This proof is an adapation of Fujinami et al. Lemma III.3 \cite{FLMP25B}. By the almost-smoothness lemma of Fazel et al. \cite{FGKM18}, combined with the monotonic property of the integrals, we have that:
    \begin{align*}
        &J_{DR}(K)-J_{DR}(K^\star) \leq \\
        & 2\vect(K-K^\star)^\top \vect\big(\mathbb{E}[ \Sigma(K^\star,\theta)E(K,\theta))]\big)\\
        &-\vect\big(K-K^\star)^\top \mathbb{E}[ \big(\Sigma(K^\star,\theta) \otimes (B^\top P(K,\theta)B + R)](\cdot)\\
        &\leq 2 \left(\Vert  \nabla_K J_{DR}(K) \Vert_F^2 + 2\Vert\mathbb{E}[ R(K,K^\star,\theta)] \Vert_F^2]\right).
    \end{align*}
    Where $(\cdot)$ is $\vect(K-K^\star)$; the result follows by employing completion of squares and adding/subtracting $\Sigma(K,\theta)$.
    \end{proof}
\end{lemma}

From here, we see that if we can bound the remainder term $R(K,K^\star)$ by a sufficiently small multiple of $\mathbb{E}[J(K,\theta)-J(K^\star,\theta)]$, we can ensure gradient domination---this is the approach taken in Lemma III.4 of Fujinami et al. \cite{FLMP25B}. We do so in the following Lemma, which extends the arguments of~\cite[Lemmas III.5-7]{FLMP25B} to the DR-LQR cost $J_{DR}$.
\begin{proposition}\label{prop:heterogeneous_sublevel_set}
Let the diameter of the set be bounded above by,
$$
\diam(\Theta) \leq \inf_{\theta \in \Theta} \frac{1}{50000 \max(\Vert \bar{\theta} \Vert,1) J(K_{\mathrm{LQR}}(\theta),\theta)^6 }
$$
where $\Theta_B$ is the support of $B$. Within the set,
$$
\mathcal{S} \!\coloneqq\! \left\{ \begin{array}{c|c} \!\!\!\!K\! &\begin{matrix}J_{DR}(K)\leq8J_{DR}(K^\star),\\
\!\Vert \nabla \displaystyle J_{DR}(K) \Vert_F \! \leq \! \frac{1}{2^8 \! \displaystyle\sup_{B\in\Theta}(\Vert B \Vert,1)J_{DR}(K)^3}\end{matrix} \end{array}\!\!\!\!\right\},
$$
we have that
\begin{align*}
J_{DR}(K)-J_{DR}(K^\star) &\leq 4\Vert  \nabla J_{DR}(K)\Vert_F^2.
\end{align*}
\end{proposition}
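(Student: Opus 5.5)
Our plan is to start from the Approximate Gradient-Domination lemma,
\begin{align*}
J_{DR}(K)-J_{DR}(K^\star) \leq 2\Vert \nabla J_{DR}(K)\Vert_F^2 + 4\Vert \mathbb{E}[R(K,K^\star,\theta)]\Vert_F^2,
\end{align*}
and to show that on $\mathcal{S}$ the remainder term satisfies $4\Vert \mathbb{E}[R]\Vert_F^2 \leq \tfrac{1}{2}\big(J_{DR}(K)-J_{DR}(K^\star)\big) + \Vert\nabla J_{DR}(K)\Vert_F^2$. Rearranging then gives $\tfrac12 (J_{DR}(K)-J_{DR}(K^\star)) \leq 2\Vert\nabla J_{DR}(K)\Vert_F^2$, which is the claim. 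This follows the structure of Lemmas III.4--III.7 of Fujinami et al.~\cite{FLMP25B}, with finite sample averages replaced by expectations over $p_\Theta$. Jensen's inequality, $\Vert\mathbb{E}[\cdot]\Vert_F\le \mathbb{E}\Vert\cdot\Vert_F$, lets every pointwise-in-$\theta$ bound pass to the expectation.

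First, we bound $\Vert E(K,\theta)\Vert_F$ for each $\theta$. The key is to compare it with the per-system gradient $\nabla J(K,\theta)=2E(K,\theta)\Sigma(K,\theta)$ and the average gradient. Since $\Sigma(K,\theta)\succeq\Sigma_w\succeq I$, we have $\Vert E(K,\theta)\Vert_F\le \tfrac12\Vert\nabla J(K,\theta)\Vert_F$. Next, we write $\nabla J(K,\theta)=\nabla J_{DR}(K)+(\nabla J(K,\theta)-\mathbb{E}\nabla J(K,\theta'))$. The deviation term is controlled by a Lipschitz bound of $\theta\mapsto\nabla J(K,\theta)$, with constant polynomial in $J(K,\theta)$ and $\bar\theta$, multiplied by $\diam(\Theta)$. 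Differentiating the Lyapunov equations in $(A,B)$ gives $P$ and $\Sigma$ Lipschitz in $\theta$, with constants polynomial in $\Vert P\Vert\le J$ and $\Vert\Sigma\Vert\le J$ (using $\mathcal{Q}\succeq I$ and $\Sigma_w\succeq I$). Uniform-in-$\theta$ control of $J(K,\theta)$ for $K\in\mathcal{S}$ comes from $J_{DR}(K)\le 8J_{DR}(K^\star)$ together with a perturbation bound: if $J(K,\theta_0)$ is moderate, then $J(K,\theta)\le 2J(K,\theta_0)$ whenever $\Vert\theta-\theta_0\Vert\lesssim 1/J^{2}$. The diameter assumption is what makes this valid across all of $\Theta$.

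Second, we bound $\Vert\Sigma(K,\theta)-\Sigma(K^\star,\theta)\Vert$ by the standard covariance perturbation bound from Fazel et al.\ \cite{FGKM18}. It gives $\Vert\Sigma(K,\theta)-\Sigma(K^\star,\theta)\Vert\lesssim \Vert\Sigma(K^\star,\theta)\Vert^2\Vert B\Vert\,\Vert K-K^\star\Vert$ (times polynomial factors), valid when $\Vert K-K^\star\Vert$ is small relative to $1/(\Vert B\Vert J^{2})$. To obtain smallness of $\Vert K-K^\star\Vert$ and to relate it to the cost gap, we use the quadratic lower bound from the almost-smoothness identity. Since $R\succeq I$ and $\Sigma(K^\star,\theta)\succeq I$, we get $J_{DR}(K)-J_{DR}(K^\star)\ge \Vert K-K^\star\Vert_F^2 - (\text{cross term})$, and at $K^\star$ the first-order condition $\nabla J_{DR}(K^\star)=0$ makes the cross term only involve heterogeneity. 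This yields $\Vert K-K^\star\Vert_F^2\lesssim J_{DR}(K)-J_{DR}(K^\star)+\text{small}$.

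Multiplying the two bounds gives $\Vert\mathbb{E} R\Vert_F^2\lesssim \mathrm{poly}(J,\bar\theta)\,\big(\Vert\nabla J_{DR}\Vert_F^2+\diam(\Theta)^2\big)\Vert K-K^\star\Vert_F^2$. The gradient bound defining $\mathcal{S}$ ($\Vert\nabla J_{DR}\Vert_F\le 2^{-8}/(\max(\Vert B\Vert,1) J^3)$) and the diameter bound with its $J^{6}$ and $50000$ make the polynomial prefactor at most a small absolute constant, say $1/8$. The main obstacle is this bookkeeping: tracking the exact polynomial powers of $J$ through the Lyapunov perturbation bounds so that they match the constants $2^8$, $J^3$, $50000$, and $J^6$ in the statement. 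A secondary difficulty is transferring the uniform bounds, stated in terms of $J(K_{\mathrm{LQR}}(\theta),\theta)$, to bounds at $K$ and $K^\star$. We would handle that by importing the corresponding lemmas of \cite{FLMP25B} nearly verbatim, pointwise in $\theta$, and then taking expectations.
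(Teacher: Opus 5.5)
\textbf{Gap: the smallness of $\Vert K-K^\star\Vert$.} Most of your plan follows the paper's structure. You start from approximate gradient domination and use $\Vert E(K,\theta)\Vert_F\le\frac12\Vert\nabla J(K,\theta)\Vert_F$. You make each per-system gradient small using the gradient condition in $\mathcal{S}$ plus a $\theta$-Lipschitz bound times $\diam(\Theta)$, with $J(K,\theta)\le 2J_{DR}(K)$ from $(\mathbf{B},s)$-boundedness. Then you bound $\Sigma(K,\theta)-\Sigma(K^\star,\theta)$ by Lyapunov perturbation.

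The gap is how you control $\Vert K-K^\star\Vert$. Fazel et al.'s covariance perturbation bound only applies inside a radius inversely polynomial in $J$ and $\Vert B\Vert$. You propose to reach that radius through quadratic growth, $\Vert K-K^\star\Vert_F^2\lesssim J_{DR}(K)-J_{DR}(K^\star)$. But nothing in $\mathcal{S}$ makes the cost gap small. A priori you only know $J_{DR}(K)-J_{DR}(K^\star)\le 7J_{DR}(K^\star)$; bounding it by the gradient is the conclusion you are trying to prove. So quadratic growth gives only $\Vert K-K^\star\Vert_F=O(\sqrt{J_{DR}(K^\star)})$, far outside the required radius. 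The step is also circular. The heterogeneity cross term $2\,\mathbb{E}\,\tr\big((\Sigma(K,\theta)-\Sigma(K^\star,\theta))(K-K^\star)^\top E(K^\star,\theta)\big)$ in your quadratic-growth inequality can only be controlled by the same covariance perturbation estimate whose hypothesis you are trying to establish.

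\textbf{The missing idea.} This is why $J(K_{\mathrm{LQR}}(\theta),\theta)$ appears in the hypothesis: triangulate through the per-system optima $K^\star(\theta)$,
\[
\Vert K-K^\star\Vert\le\mathbb{E}\Vert K-K^\star(\theta)\Vert+\mathbb{E}\Vert K^\star-K^\star(\theta)\Vert\le\sqrt{2J_{DR}(K)}\;\mathbb{E}\big[\Vert\nabla J(K,\theta)\Vert+\Vert\nabla J(K^\star,\theta)\Vert\big].
\]
This uses single-system gradient dominance and quadratic growth for each $\theta$. Both per-system gradients are small. The first is small by the gradient condition in $\mathcal{S}$ plus heterogeneity; the paper gets $\Vert\nabla J(K,\theta)\Vert\le 1/(16J_{DR}(K)^{3/2})$. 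The second is small because $\nabla J_{DR}(K^\star)=0$, again plus heterogeneity.

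The paper also avoids any radius condition on the covariance term. It writes $\Sigma(K^\star,\theta)-\Sigma(K,\theta)$ exactly as $\dlyap(A-BK,\cdot)$ applied to a term that is linear plus quadratic in $K-K^\star$. This identity holds for any two gains stabilizing $\theta$. Smallness of $\Vert K-K^\star\Vert$ is then needed only to tame the quadratic part, and it yields $\Vert\mathbb{E}[R(K,K^\star,\theta)]\Vert_F\le\frac14\Vert K-K^\star\Vert_F$.

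\textbf{Closing step and constant.} Once you have that bound, no DR quadratic-growth lemma is needed. Keep the unsquared almost-smoothness bound and write $\Delta=K-K^\star$:
\[
J_{DR}(K)-J_{DR}(K^\star)\le\langle \Delta,\nabla J_{DR}(K)-2\mathbb{E}R\rangle-\Vert \Delta\Vert_F^2\le\Vert \Delta\Vert_F\Vert\nabla J_{DR}(K)\Vert_F-\tfrac12\Vert \Delta\Vert_F^2\le\tfrac12\Vert\nabla J_{DR}(K)\Vert_F^2 .
\]
Your rearrangement is also off. From $4\Vert\mathbb{E}R\Vert_F^2\le\frac12(J_{DR}(K)-J_{DR}(K^\star))+\Vert\nabla J_{DR}(K)\Vert_F^2$ you get the constant $6$, not $4$. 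To get $4$ you need the remainder bound without the additive gradient term.
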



\noindent\textbf{Convergence of Gradient Descent:} 

We now have all the ingredients needed to prove convergence of gradient descent to an optimal solution to problem~\eqref{eq:dr_lqr} assuming an initial stabilizing controller $K_0\in\mathcal{K}_\Theta\cap S$.\footnote{We show in the appendix that under gradient dominance, a slight modification of the discounted-annealing scheme proposed in Fujinami et al. \cite{FLMP25B} enables one to find such an initial controller $K_0\in\mathcal{K}_\Theta\cap\mathcal{S}$.}  

\begin{theorem}[Linear Convergence]\label{thm:risk_gradient_convergence}
    Let $K_0\in\mathcal{K}_\Theta\cap \mathcal{S}$, set $\eta\leq L_K^{-1}$, and consider the gradient descent iterates
    $$
    K_{\ell+1} = K - \eta \nabla J_{DR}(K).
    $$
    Then, the following are true:
    \begin{itemize}
        \item For all $\ell=0,1,2,...$, we have that $K_\ell\in\mathcal{K}_\Theta\cap S.$
        \item The iterates satisfy 
        \begin{align}
    &J_{DR}(K_\ell) - J_{DR}(K^\star) \nonumber\\
    &\leq \left(1 -  \frac{1}{8L_K}\right)^\ell\big(J_{DR}(K_0) - J_{DR}(K^\star)\big). \label{eq:linear_rate_exact_gradient}
    \end{align}
    \end{itemize}
    In particular, if we take
    \begin{equation} \label{eq:num_grad_steps} N\geq \log\Big(\frac{J_{DR}(K_0) - J_{DR}(K^\star)}{\varepsilon }\Big)\Big/\log\Big(\frac{8L_k}{8L_K-1}\Big)
    \end{equation}
    gradient descent steps, we achieve suboptimality,
    $$
    J_{DR}(K_N) - J_{DR}(K^\star) \leq \varepsilon.
    $$
    \begin{proof}
        The result follows by~\cite[Theorem 1]{HZLMFB23}.  In particular, we have proved that for any $K\in \mathcal{K}_\Theta\cap S$, the DR-LQR cost $J_{DR}(K)$ is coercive and is $L_K$-smooth (Lem.~\ref{lem:policy_smoothness}).  Further note that if $K_0\in \mathcal{K}_\Theta\cap S$ then all future iterates $K_\ell$ also belong to the set $\mathcal{K}_\Theta\cap S$ follows from the descent guarantee following Prop.~\ref{prop:implicit_regularization} and the definition of $\mathcal{S}$, as gradient norms of iterates are bounded by a monotonically decreasing function.  Finally, as the DR-LQR cost function satisfies a gradient dominance property of degree 2 with parameter $\mu=\frac{1}{8}$, the linear convergence rate~\eqref{eq:linear_rate_exact_gradient} is guaranteed.
    \end{proof}
\end{theorem}
\section{Convergence Analysis of SGD for DR-LQR}
\label{sec:sgd}
This section provides an end-to-end convergence analysis for the minibatch SGD procedure in \Cref{alg:dr-mba}, where the primary goal is find hyperparameters guaranteeing a $\varepsilon$-suboptimal solution with probability at least $1-\delta$. This requires explicit characterization of a minibatch size $M$, step size $\eta$, and number of descent steps, $N$. The main technical challenge is the potential for infeasibility arising from gradient approximation. While we know from \cref{prop:implicit_regularization} that gradient steps maintain feasibility, it is possible for approximate gradient steps to fail to maintain feasibility due to  gradient-approximation errors.

\subsection{SGD Analysis Setup}
Throughout this section we assume that $J_{DR}(K)$, is $L_K$-smooth on its sublevel-sets as previously established in \Cref{sec:gd}, and $\mu$-gradient dominated on $\mathcal{S}$ (as in \cref{prop:heterogeneous_sublevel_set}), and on a fixed sublevel set $S_c$. Further, we define, the minibatch gradient $g(K)$, exact gradient step $K^{\gd}$, and stochastic gradient step $K^{\sgd}$ as
\begin{align*}
    g(K) &\coloneqq \frac{1}{M}\sum_{i=1}^{M}\nabla_K J(K,\theta_i), \qquad \theta_i \overset{\text{iid}}{\sim} p_\Theta \tag{MB-G}\\
    K^{\gd} &\coloneqq K - L_K^{-1} \nabla J_{DR}(K)\\
    K^{\sgd} &\coloneqq K - L_K^{-1}  g(K).
\end{align*}
This section analyzes the convergence properties of the SGD iterates:
\begin{equation}\label{eq:sgd_iterates}
K_{\ell+1}  = K_{\ell } - L_K^{-1} \nabla g(K_\ell) = K^{\sgd}_{\ell} . \tag{MB-SGD}
\end{equation}
To analyze how accurately $g(K)$ must estimate the gradients, it is helpful to interpret the stochastic gradient update as a perturbation to the gradient update,
$$
K^{\sgd}_{\ell} = K_{\ell}^{\gd} + (K^{\sgd}_{\ell}- K_{\ell}^{\gd}).
$$
Employing this with known feasibility guarantees on $J(K,\theta)$, like the ones in Fazel et al. \cite{FGKM18}, allows us to derive similar guarantees for $J_{DR}(K)$. We condense and summarize the contents of lemma 22 through 27, relegating its proof to the appendix \cite{FGKM18}
.
\begin{proposition}[DR Feasible Step Size]{\label{prop:feasible_step_size}}
Consider any $K$ belonging to a sublevel set $ S_c$ of $J_{DR}$. If
$$
\Vert K^{\sgd} - K^{\gd}\Vert_{\op} \leq \frac{\sigma_{\mathrm{min}}(Q) \sigma_{\mathrm{min}}(\Sigma_w)}{4 J_{\mathrm{DR}}(K) \big\Vert \bar{\theta}\big\Vert_{\mathrm{op}}\big( \big\Vert \bar{\theta}\big\Vert_{\mathrm{op}}  + 1\big)} \eqqcolon c_g
$$
then
$$
|J_{DR}(K^{\gd}) - J_{DR}(K^{\sgd})| \leq L_{cost}\Vert K^{\gd} - K^{\sgd}\Vert_{F}.
$$
Furthermore, the entire line segment, $\mathcal{I}(K,K^{\sgd} )$ is feasible.
\end{proposition}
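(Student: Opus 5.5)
The plan is to reduce the domain-randomized statement to the single-system perturbation results of Fazel et al.~\cite{FGKM18} (Lemmas 22--27), applied uniformly in $\theta\in\Theta$, and then take expectations. Fix $K\in S_c$. First, by the implicit regularization property following Prop.~\ref{prop:implicit_regularization}, $K^{\gd}\in S_c$ and $J_{DR}(K^{\gd})\le J_{DR}(K)$. Since $J(\cdot,\theta)\ge 0$, for every $\theta$ we have $J(K^{\gd},\theta)<\infty$, and the closed-loop covariance satisfies $\Vert\Sigma(K^{\gd},\theta)\Vert_{\op}\le J(K^{\gd},\theta)/\sigma_{\min}(\mathcal{Q})$. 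The per-system perturbation lemma of \cite{FGKM18} states that if $\Vert\Delta\Vert_{\op}\le \sigma_{\min}(Q)\sigma_{\min}(\Sigma_w)/(4J(K',\theta)\Vert B\Vert_{\op}(\Vert A-BK'\Vert_{\op}+1))$, then $K'+\Delta$ stabilizes $(A,B)$, $\Vert\Sigma(K'+\Delta,\theta)-\Sigma(K',\theta)\Vert$ is bounded linearly in $\Vert\Delta\Vert$, and $|J(K'+\Delta,\theta)-J(K',\theta)|\le L(\theta)\Vert\Delta\Vert_F$ for a polynomial $L(\theta)$ in $J(K',\theta)$, $\Vert A\Vert$, $\Vert B\Vert$, $\Vert K'\Vert$, $\Vert\mathcal{Q}\Vert$, $\sigma_{\min}(\Sigma_w)$.

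The key step, and the expected main obstacle, is converting this $\theta$-dependent radius into the uniform radius $c_g$, which involves only $J_{DR}(K)$ and $\bar\theta$. Two issues arise. One must bound $\Vert B\Vert(\Vert A-BK'\Vert+1)$ by $\Vert\bar\theta\Vert(\Vert\bar\theta\Vert+1)$, and this requires controlling $\Vert K'\Vert$; I would use the coercivity bound $\Vert K'\Vert_F^2\le J_{DR}(K')/\lambda_{\min}(\Sigma_w)$ from the coercivity subsection together with $\Sigma_w\succeq I$, absorbing constants. One must also control $J(K',\theta)$ by $J_{DR}(K)$ pointwise in $\theta$, whereas only the average is directly controlled. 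I would obtain this through the feasibility $\sup_\theta J(K,\theta)<\infty$ and the implicit-regularization descent, and, if needed, restate $c_g$ in terms of $\sup_\theta J(K,\theta)$, bounded via compactness of $\Theta$ and continuity of $J$ on $\mathcal{K}_\Theta$. With these bounds, $\Vert K^{\sgd}-K^{\gd}\Vert_{\op}\le c_g$ implies the per-$\theta$ hypothesis at $K'=K^{\gd}$ simultaneously for all $\theta$.

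With the uniform hypothesis in hand, the cost bound follows from the per-$\theta$ Lipschitz estimate and Jensen: $|J_{DR}(K^{\gd})-J_{DR}(K^{\sgd})|\le \mathbb{E}|J(K^{\gd},\theta)-J(K^{\sgd},\theta)|\le \mathbb{E}[L(\theta)]\,\Vert K^{\gd}-K^{\sgd}\Vert_F$. I would define $L_{cost}$ as a uniform polynomial upper bound on $L(\theta)$ in $J_{DR}(K)$, $\Vert\bar\theta\Vert$, and the problem constants, using the same substitutions as above.

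For feasibility of the segment $\mathcal{I}(K,K^{\sgd})$, I would write each point as $K-tL_K^{-1}g(K)=(K-tL_K^{-1}\nabla J_{DR}(K))+t(K^{\gd}-K^{\sgd})$ for $t\in[0,1]$. The first term lies in $S_c$ by the implicit regularization argument applied to the exact segment. The perturbation $t(K^{\gd}-K^{\sgd})$ has operator norm at most $c_g$, and the radius argument above holds at every point of the exact segment, since all of them have DR cost at most $J_{DR}(K)$. Hence every point stabilizes every $\theta\in\Theta$, which gives $\sup_\theta J<\infty$ along the segment, i.e.\ the segment lies in $\mathcal{K}_\Theta$.
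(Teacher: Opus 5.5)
Your overall architecture matches the paper's: apply the single-system perturbation estimates of Fazel et al.\ pointwise in $\theta$ at the base point $K^{\gd}$ (whose DR cost is at most $J_{DR}(K)$), make them uniform over $\Theta$, and integrate. The gap is in the uniformization step. You correctly flag it as the crux, but you do not resolve it.

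Feasibility of $K$ and the descent property only give $\sup_\theta J(K,\theta)<\infty$. They give no bound on $J(K,\theta)$ in terms of $J_{DR}(K)$, and without further assumptions no such bound exists. If $p_\Theta$ puts small mass near systems that $K$ barely stabilizes, the ratio $\sup_\theta J(K,\theta)/J_{DR}(K)$ is arbitrarily large. The stability margin at those systems then scales like $1/J(K,\theta)$, which can be much smaller than $c_g\propto 1/J_{DR}(K)$.

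The Lipschitz part has the same problem. $\mathbb{E}[L(\theta)]$ contains terms such as $\mathbb{E}[J(K,\theta)^{5/2}]$, and Jensen's inequality goes the wrong way, so these cannot be bounded by a polynomial in $J_{DR}(K)$. Your fallback of rewriting $c_g$ in terms of $\sup_\theta J(K,\theta)$, via compactness and continuity, proves a different and weaker proposition than the one stated.

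The missing ingredient is the heterogeneity assumption on $\diam(\Theta)$ from \cref{prop:heterogeneous_sublevel_set}, used through the $(\mathbf{B},s)$-boundedness lemma (\cref{lem:bs_boundedness}). That lemma states that whenever $J_{DR}(K')\le 8J_{DR}(K^\star)$, one has $J(K',\theta)\le 2J_{DR}(K')$ for every $\theta\in\Theta$. This is how the paper replaces each $J(K,\theta)$ in its pointwise bounds by $2J_{DR}(K)$, which is where the factors $2^{1/2}J_{DR}(K)^{1/2}$ and $2^3J_{DR}(K)^{5/2}$ in $L_{cost}$ come from.

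$K^{\gd}$ and every point of the exact-gradient segment have DR cost at most $J_{DR}(K)$. So once $K$ is restricted to the regime $J_{DR}(K)\le 8J_{DR}(K^\star)$ (i.e.\ $c\le 8J_{DR}(K^\star)$, the setting of $\mathcal{S}$, not an arbitrary sublevel set), the lemma applies to all of these points. Up to absolute constants, your per-$\theta$ radius and Lipschitz estimates then become uniform, and your decomposition argument for feasibility of $\mathcal{I}(K,K^{\sgd})$ goes through.

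Separately, ``absorbing constants'' cannot turn $\Vert B\Vert_{\op}(\Vert A-BK'\Vert_{\op}+1)$ into $\Vert\bar\theta\Vert_{\op}(\Vert\bar\theta\Vert_{\op}+1)$. The coercivity bound leaves an extra factor of order $1+\sqrt{J_{DR}(K)/\lambda_{\min}(\Sigma_w)}$, so your route yields a radius smaller than the explicit $c_g$ by that factor. The paper's own sketch does not track this term either.
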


 The derivations of the constant $c_g$ and $L_{cost}$, follow by application of matrix calculus identities are provided in the appendix of the extended version of this work. \Cref{prop:feasible_step_size} ensures that we possess enough smoothness for gradient descent to be tunable and yield feasible iterates, a key step in invoking gradient-domination bounds.
\subsection{One-Step Decomposition}
To analyze the iterates obtained by SGD \cref{eq:sgd_iterates}, we first decompose the cost at each iteration into an exact gradient update step perturbed by an error due to sampling.
\begin{lemma}[Cost Decomposition, \cite{FGKM18} Thm. 31] \label{lem:gradient_step_cost_decomposition} Assume that $\Vert K^{\gd} - K^{\sgd}\Vert$ satisfies \cref{prop:feasible_step_size}. The cost difference can be bounded above by,
\begin{align}
J_{DR}(K^{\sgd}) \!-\! J_{DR}(K^{\star}) &\!\leq\! \Big(1 \!-\! \frac{\mu}{L_K} \Big) \big(J_{DR}(K) \!-\! J_{DR}(K^\star)\big)\notag\\
& \qquad + L_{\mathrm{cost}}\Vert K^{\sgd}- K^{\gd}\Vert_F \label{eq:grad_step_cost_decomp}
\end{align}
\begin{proof}
    By \cref{prop:implicit_regularization}, we have that $K^\mathrm{gd}$ is feasible, and thus satisfies:
    \begin{align}
        &J_{DR}(K^{\gd}) \!-\! J_{DR}(K^\star)\!\leq\! \left(\!\!1 - \frac{\mu}{L_K}\!\!\right)(J_{DR}(K) - J_{DR}(K^\star))\nonumber
    \end{align}
    Adding and subtracting $J_{DR}(K^{\sgd})$ to the left-hand side, rearranging the terms, and applying \cref{prop:feasible_step_size} yields the result.
\end{proof}
\end{lemma}
This one-step decomposition reveals the terms that need to be controlled to establish convergence of SGD. In the upper bound of \cref{eq:grad_step_cost_decomp}, we have a contraction term of \cref{eq:linear_rate_exact_gradient}, perturbed by an error term due to gradient approximation. We use a sufficient condition for descent adapted from Fazel et al. \cite{FGKM18}, which accounts for the error as a contraction rate degradation.

\begin{lemma}[Degraded Contraction]\label{lem:degraded_contraction} Fix a desired suboptimality $\varepsilon$. For $\ell \in \{0,\ldots,N-1\}$, if we have
\begin{align*}
\Vert K_\ell^{\sgd}- K_\ell^{\gd}\Vert_F&\leq \frac{\mu\varepsilon}{2L_{\mathrm{cost}}L_K}, \qquad \varepsilon \leq J_{DR}(K_\ell) - J_{DR},
\end{align*}
we obtain the degraded contraction rate 
$$
J_{DR}(K_\ell) - J_{DR}(K^\star) \!\leq\! \Big(\!1\!-\!\frac{\mu}{2L_K}\!\Big)^\ell (J_{DR}(K_0) - J_{DR}(K^\star) ).
$$
\begin{proof}
This follows by an induction, following \cite[Theorem 31]{FGKM18}, \cref{lem:gradient_step_cost_decomposition}, and the additional assumptions we make.
    \begin{corollary}[Gradient Estimation Error]\label{cor:allowable_estimation_error} If, at each iteration, we have that
    $$
    \Vert g(K_\ell) - \nabla J_{DR}(K_\ell) \Vert_F \leq \frac{\mu\varepsilon}{2L_{\mathrm{cost}}},
    $$
    then we obtain the degraded contraction rate of \cref{lem:degraded_contraction}.
    \end{corollary}
\end{proof}
\end{lemma}

A direct consequence of \cref{cor:allowable_estimation_error}, is a link between allowable gradient estimation errors to and the corresponding policy errors. We can arrive at an expression for the required accuracy of the minibatch gradients to remain feasible, and yield a degraded linear convergence rate.
All of these quantities can be expressed in terms of problem specific constants. Finally, we address the issue of minibatch-gradient concentration.
\subsection{Concentration of minibatched-gradient estimators}
We start by denoting the gradient estimation error by $\varepsilon_\nabla$, where we would like to ensure that:
$$
\Vert g(K)- \nabla J_{DR}(K) \Vert_F \leq \varepsilon_{\nabla}.
$$
for some choice of gradient estimation error $\varepsilon_\nabla$.
We achieve this by invoking the Matrix Bernstein inequalities, which require that we find norm and variance bounds on \eqref{eq:MB-G}. This can be achieved by defining the zero-mean symmetrized matrix,
\begin{align}
    G_M &\coloneqq \begin{bmatrix} 0 & \vect\big(g(K) - \nabla_K J_{DR}(K)\big)\\ * & 0 \end{bmatrix}= \sum_{i=1}^M G_i\label{eq:sym_bernstein}
\end{align}
that we will bound with the rectangular matrix Bernstein inequality, found in Vershynin \cite[Exercise 5.4.15]{V18}, adapted to the task at hand. Bounds on the support norm $\bar{G}$ and variance proxy $\sigma^2$ as expressions in the problem constants are characterized in our appendix. Essentially, we employ the mean-value theorem to obtain these in terms of problem specific parameters and $\diam(\Theta)$.

\begin{figure*}[!t]
    \centering
    \includegraphics[width=\linewidth]{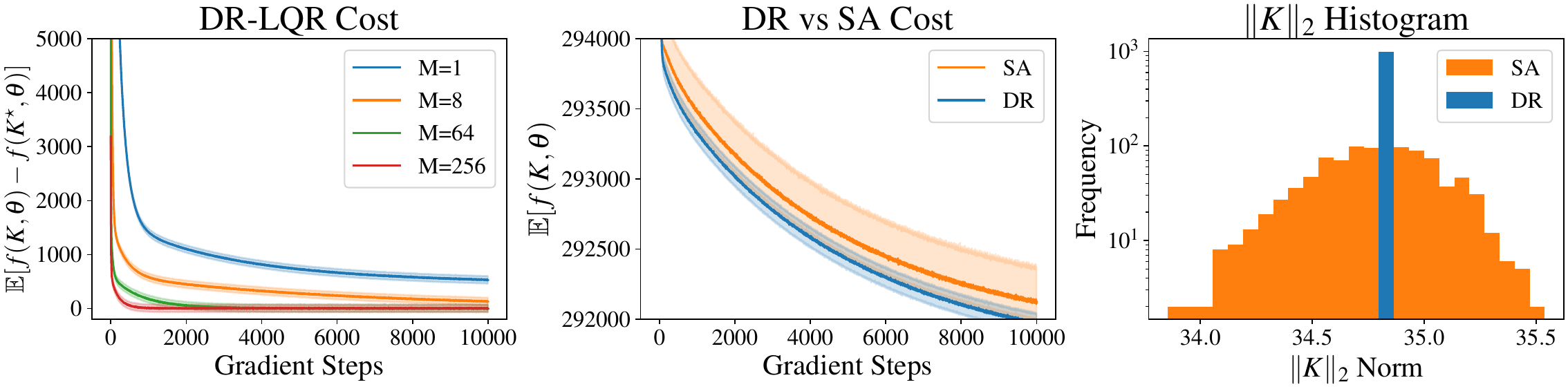}
    \caption{All figures are generated with 1000 independent trials with 10000 gradient steps. \textbf{Left:} We plot the median and the 25\%  and 75\% percentiles for 1000 independent trials of optimizing the domain randomization cost with varying minibatch size. To approximate the DR-cost we employ the sample average estimator,
    $
    J_{DR}(K) \approx \frac{1}{n_\text{dr}}\sum_{i=1}^{n_\text{dr}} J(K,\theta_i)
    $
    with a large amount of samples ($n_{dr} = 10^5$) used to visualize the descent dynamics of the DR-cost. \textbf{Center:} We provide a zoomed in figure for the descent dynamics of the SA synthesized controller of \cite{FLMP25B} and the DR synthesized controller by \cref{alg:dr-mba} in this work with $M=8$ for both optimization procedures, as well as the 25\% and 75\% percentiles of the cost trajectories. \textbf{Right:} We plot the empirical distribution (on a logarithmic scale) of the $\ell_2$ norm of the final obtained controller to illustrate the reduced variance in the controller synthesized by DR versus SA.}
    \label{fig:DR-LQR}
\end{figure*}
\begin{lemma}[Matrix Bernstein Inequality]\label{lem:matrix_bernstein} Let $G_i$ be defined as in \eqref{eq:sym_bernstein}, and assume that,
$$
\begin{gathered}
\Vert\nabla_K J(K,\theta_i) - \nabla J_{DR}(K)\Vert_F \leq \bar{G}\\
\mathbb{E}[\Vert\nabla_K J(K,\theta_i) - \nabla_KJ_{DR}(K)\Vert_F^2]   \leq \sigma^2
\end{gathered}
$$
then,
\begin{align*}
\Prob\left\{ \Vert G_M\Vert_{F} \geq \varepsilon_{\nabla}  \right\} \!\leq \!2(n_xn_u\!+\!1)\!\exp\!\left(\!\frac{-M \varepsilon_\nabla^2}{4 (\sigma^2 \!+\! \bar{G}\varepsilon_\nabla/3) }\!\right)\!.
\end{align*}
Thus, if we pick $M$ as,
$$
M \geq \frac{4(\sigma^2 + \bar{G}\varepsilon_\nabla/3)}{\varepsilon_{\nabla}^2}\log\left(\frac{2(n_xn_u+1)}{\delta}\right),
$$
with probability at least $1-\delta$, each gradient-estimation error is bounded above by $\varepsilon_\nabla$.
\begin{proof}
The matrix Bernstein inequality we use is expressed as a Frobenius norm; the conversion and expression in problem dependent constants is covered in Appendix \cref{lem:var_poly_bound}.

\end{proof}
\end{lemma}
\subsection{Convergence of Stochastic Gradient Descent}
Finally, we have all of the expressions needed to find an explicit set of hyperparameters to make \cref{alg:dr-mba} converge to an $\varepsilon$-suboptimal policy,  expressed solely in terms of the problem specifications.
\begin{theorem}[High-Probability Convergence of Minibatch SGD]\label{thm:hyperparamater_tuning} Consider \eqref{eq:dr_lqr} problem where the conditions of \cref{prop:heterogeneous_sublevel_set} are satisfied. Then, we have that $J_{DR}(K)$ is $\mu$-gradient dominated with $\mu = 1/8$ and with access to a feasible starting controller $K_0 \in \mathcal{S}$. Pick a desired suboptimality, $\varepsilon
$ and failure probability $\delta$. If we run \cref{alg:dr-mba} with:
\begin{align*}
    \eta &\leq L_K^{-1} \nonumber\\
    N &\geq \log\Big(\frac{J_{DR}(K_0) - J_{DR}(K^\star)}{\varepsilon }\Big)\Big/\log\Big(\frac{8L_k}{8L_K-1}\Big)\nonumber\\
    M &\geq \frac{4(\sqrt{2}\sigma^2 + \bar{G}\varepsilon_\nabla/3)}{\varepsilon_{\nabla}^2}\log\left(\frac{2N(n_xn_u + 1)}{\delta}\right), \label{eq:batch_size}\\
    \varepsilon_\nabla &\leq \min\left\{\frac{\mu\varepsilon}{2L_{\mathrm{cost}}}, c_g\right\}
\end{align*}
then, with probability at least $1-\delta$ over the randomness of the minibatch sampling,
$$
J_{DR}(K) - J_{DR}(K^\star) \leq \varepsilon.
$$
Where all constants depend only on $\mu$, $K_0$, $\mathcal{Q}$, $\Sigma_w$, $n_x$, $n_u$, $\diam(\Theta)$, $\Vert \bar{\theta} \Vert$, $\varepsilon$, and $\delta$.
\begin{proof} We already have expressions for the step size and the number of iterations, (that can also account for the degraded contraction rate) from \cref{eq:LK_upper_bound} and \cref{eq:num_grad_steps},
where we use the degraded contraction rate of \cref{lem:degraded_contraction}. However, to find a minibatch size, we need to ensure that all $N$ iterations to succeed. If we specify a probability of success $\delta$, and set the failure probability at each stage to be $\delta/N$, then $M$ is given by \cref{lem:matrix_bernstein}. Finally, to find $\varepsilon_{\nabla}, \bar{G}, \sigma^2$, we employ \cref{cor:allowable_estimation_error} and \cref{lem:gradient_step_cost_decomposition} to express $\varepsilon_\nabla$ in terms of problem dependent parameters and a starting point, and \cref{lem:var_poly_bound} to express $\bar{G}$ and $\sigma^2$ in terms of in terms of problem dependent parameters and a starting point.
\end{proof}
\end{theorem}
We emphasize that with this batch size $M$, the intermediate iterates are also feasible with high-probability, and that  \Cref{alg:dr-mba} with parameters prescribed by \cref{thm:hyperparamater_tuning} only yields a $\varepsilon$-suboptimal controller, but taking any additional steps is not guaranteed to yield any further progress. Refinements with step size tunings that yield globally convergent SGD we leave for future work. 

\section{Numerical Experiments}

We demonstrate that we can solve the DR-LQR problem defined in equation~\eqref{eq:dr_lqr} in practice with minibatched SGD. In our experiments, we consider stabilizing a cart-pole system linearized by automatic differentiation and discretized using the matched transformation. For our randomization domain, $\Theta$, we randomize pendulum lengths that we assume come from a uniform distribution supported on $[0.2,0.8]\,\mathrm{m}$, a cart mass of $1$kg, a pendulum mass of 1kg, a track-cart coefficient of friction $0.25$, and a pendulum joint friction $0.01\, \mathrm{N}\cdot m$. To generate our figures, we plot data from 1000 independent trials, with step-size $\eta = 5\times 10^{-8}$ and number of gradient steps $t = 10000$. We do not use \cref{thm:hyperparamater_tuning} directly to find $M$, $N$ or $\eta$, instead finding them empirically by hyperparameter tuning. Our conservative step size was tuned specifically to ensure convergence on all batch sizes for equal comparisons of optimizer progress versus number of gradient steps. All other implementation details will be relegated to the appendix.

Figure \ref{fig:DR-LQR} (left) illustrates that that increasing minibatch size leads to more accurate estimates of the gradient which leads to faster convergence of minibatched SGD. Further, more accurate gradients lead to better estimation of the global optima due to the lower gradient error incurred by sampling, as predicted by standard SGD theory and tighter concentration of the minibatch gradient about $\nabla J_{DR}(K)$\cite{GG23}.

We also compare this to the sample-averaged (SA) approximation to the DR cost, which first fixes $M=8$ systems to form a surrogate cost function to optimize with as in \cite{FLMP25B}. This is illustrates the case of limited memory, where the optimizer employed only has enough memory to store $M$ systems, where differences between DR and SA are more substantial. We observe that the controllers found by DR achieve lower costs, and have lower variability compared to their SA counterparts in the sense that the $25^\mathrm{th}$ and $75^\mathrm{th}$ percentiles are closer together. Further, the right-figure plotting the $\ell_2$ norm of the eventual controller shows that in the parameter space, there is substantially lower variation in controllers found by DR when compared to SA. Finally, we emphasize that while the stochasticity in the DR cost in \ref{fig:DR-LQR} (center) suggest that the controllers synthesized by DR can vary significantly in cost, this stems from the DR cost approximation, not from inconsistency in the final converged controller.

In our experiments, the SA optimization procedure of Fujinami et al. \cite{FLMP25B} takes 33.1 s (on average) to synthesize a controller for a fixed set of $M=8$ systems, while the DR optimization routine takes 40.5 s (on average) to run on a Nvidia A100 GPU. For a modest 22\% increase in runtime, we obtain controllers with lower DR cost and lower $\Vert K\Vert_{2}$ variance. This demonstrates a clear and practical advantage of the DR formulation; for a marginal increase in computational effort, the algorithm produces controllers that are not only higher-performing but, critically, are also substantially more reliable and consistent.
\section{Conclusions and Future Work}
We have studied a stochastic policy gradient method for the domain randomized linear-quadratic regulator problem. We showed that repeatedly sampling new systems at each gradient step led to convergence to global optima under appropriate hyperparameter choices. Compared to approaches that rely on a fixed set of systems, our method produced better controllers with lower variability in the final performance. Since sampling is often computationally inexpensive, evaluating gradients on freshly sampled systems proves to be more effective than using a static collection.

Our future directions are primarily targeted as loosening the heterogeneity conditions employed to establish gradient-dominance, as the ones employed here are very restrictive. Further, we also believe it is possible to strengthen our methods to yield globally convergent SGD instead of just $\varepsilon$-suboptimality. As this is the case in analyses of SGD under conditions of gradient dominance, we suspect that the extension to settings with nonconvex feasible sets in accessible with the tools developed here and more broadly in the optimization literature.

\section{Acknowledgements}
A. Nguyen-Le thanks Bruce Lee and Tesshu Fujinami for their insightful comments. This work was supported in part by NSF CAREER award ECCS-204583.

\bibliographystyle{IEEEtran}
\bibliography{IEEEabrv,references}
\onecolumn
\newpage
\section{Appendix}
    \subsection{Gradient Domination}
    We step through the main points of the proof to establish that $J_{\text{DR}}$ is gradient dominated, which starts by establishing that under the prescribed homogeneity, we have that for all feasible $K$,
    $$
    J_{\mathrm{DR}}(K) \leq 8J_{\mathrm{DR}}(K^\star) \qquad \implies \qquad J_{\text{LQR}}(K,\theta) \leq 2J_{\mathrm{DR}}(K)
    $$
    From here, we recall that with the substitution $\begin{bmatrix}\bar{A} & \bar{B}\end{bmatrix} = \Sigma_w^{1/2}\begin{bmatrix}A & B\end{bmatrix}$, this allows us to consider noise covariance $I$ to get $J_{DR}(K) = \mathbb{E}[\mathrm{tr}(\tilde{P}(K,\theta))]$, where $\bar{P}(K,\theta) = \mathrm{dlyap}(\bar{A} - \bar{B}K,I)$, where we will abuse notation with $\bar{P}(K,\bar{\theta}) = P(K,\theta)$.
    \begin{lemma}[$(\mathbf{B},s)$-boundedness, \cite{FLMP25B} Lemma III.5] \label{lem:bs_boundedness}
        Under the assumption that:
        $$
        \diam(\Theta) \leq \inf_{\theta \in \Theta} \frac{1}{50000 \displaystyle\sup_{B\in \Theta_B}(\Vert B \Vert,1) J(K_{\mathrm{LQR}}(\theta),\theta)^6 }
        $$
        we have that:
        $$
        J_{\mathrm{DR}}(K) \leq 8J_{\mathrm{DR}}(K^\star) \qquad \implies \qquad J_{\text{LQR}}(K,\theta) \leq 2J_{\mathrm{DR}}(K)
        $$
        where $B = 8$ and $s = 2$.
        \begin{proof}
        Following the strategy in \cite{FLMP25A} Lemma III.5, we begin by:
            \begin{align*}
                J_{\text{LQR}}(K,\tilde{\theta}) &\coloneqq \min_{\theta \in \Theta} J_{\text{LQR}}(K,\theta) \leq \underset{\theta}{\mathbb{E}} [J_{\text{LQR}}(K,\theta)] \leq 8\underset{\theta}{\mathbb{E}}[J_{\text{LQR}}(K_{\text{DR}}^\star, \theta)] \leq 8\underset{\theta}{\mathbb{E}}[3J_{\text{LQR}}(K(\theta),\theta)] \leq 24 J_{\text{DR}}(K,\theta)\\
                \varepsilon_{\mathrm{het}} &\leq \frac{1}{50000\cdot\displaystyle \sup_{\theta \in \Theta}\tau_B(\theta) J_{DR}(K)^6}\leq \frac{1}{16(24)^2J_{DR}(K)^2}\leq \frac{1}{16\big(24J_{DR}(K)\big)^2}\leq \frac{1}{16J(K,\tilde{\theta})^2}
    \end{align*}
    Again, following \cite{FLMP25A} with $\mathrm{tr}(P(K,\theta) - P(K,\tilde{\theta})) \leq 8\mathrm{tr}\big( P(K,\theta) \big)\big\Vert P(K,\tilde{\theta}) \big\Vert_{\op}^2 \varepsilon_{\mathrm{het}} \leq \frac{1}{2}\mathrm{tr}\big( P(K,\theta) \big)$
    \begin{align*}
    J_{\mathrm{LQR}}(K,\theta) - J_{\mathrm{LQR}}(K,\tilde{\theta}) &\leq 8 J_{\mathrm{LQR}}(K,\theta) \Vert P(K,\tilde{\theta})\Vert^2\varepsilon_{\mathrm{het}}\leq \frac{1}{2}J_{\mathrm{LQR}}(K,\theta)\\
    \intertext{where the last inequality follows by plugging in the expression for $\varepsilon_{\mathrm{het}}$, which implies:}
    J_{\mathrm{LQR}}(K,\theta) &\leq 2J_{\mathrm{LQR}}(K,\tilde{\theta}) \leq 2J_{\mathrm{DR}}(K)
    \end{align*}
        \end{proof}
    \end{lemma}
    \noindent This is the first step in establishing gradient dominance, the major accomplishment in Fujinami et al. \cite{FLMP25B}, which proceeds to employ a sequence of pointwise bounds to establish gradient dominance. Since all inequalities involved are pointwise, the analysis in \cite{FLMP25B} generalizes due to monotonicity of summation/integration. We jump to the crucial part of the analysis, sketching in the key points, remarking that all intermediate steps proceed in exactly the same way.
    
    \begin{lemma}
    Consider the domain-randomized LQR problem \eqref{eq:dr_lqr} under the heterogeneity,
    $$
    \diam(\Theta) \leq \inf_{\theta \in \Theta} \frac{1}{50000 \displaystyle\sup_{B\in \Theta_B}(\Vert B \Vert,1) J(K_{\mathrm{LQR}}(\theta),\theta)^6 }
    $$
    within the set,
    $$\mathcal{S} \coloneqq \left\{ \begin{array}{c|c} K & J_{DR}(K)\leq 8J_{DR}(K^\star),\quad 
    \Vert \nabla  J_{DR}(K) \Vert_F  \leq \displaystyle\frac{1}{\smash{2^8  \displaystyle\sup_{B\in\Theta}(\Vert B \Vert,1)J_{DR}(K)^3}} \end{array}\right\}
    $$
    we have that $J_{\mathrm{DR}}(K)$ is gradient dominated.
    \begin{proof}
    We simply prove an analogous version of Fujinami et al. \cite{FLMP25B} Lemma III.7 which implies the desired result.
    \begin{align*}
    \Vert R(K,K_{\mathrm{DR}}^*) \Vert_F &= \mathbb{E}\Big[ E(K,\theta)\Sigma(K,\theta)\Big( \big(\Sigma(K,\theta)\big)^{-1}\big( \Sigma(K_{SA}^*,\theta) - \Sigma(K,\theta)\big) \Big) \Big]\\
    &\leq \mathbb{E}\Big[ \Vert \nabla J(K,\theta) \Vert_F \big\Vert\Sigma(K,\theta) ^{-1}\big( \Sigma(K_{\mathrm{DR}}^*,\theta) - \Sigma(K,\theta)\big) \big\Vert_{\op} \Big].
    \end{align*}
    Tackling the $\Sigma(K_{\mathrm{DR}}^*, \theta) - \Sigma(K,\theta)$ term, we see that:
    \begin{align*}
            \Sigma(K_{\mathrm{DR}}^*,\theta) - \Sigma(K,\theta) &= \mathrm{dlyap}\big(A-BK, B(K^*_{\mathrm{DR}} - K)\Sigma(K^*_{\mathrm{DR}}, \theta)\big(A - BK\big)^\top +  \big(A - BK \big)\Sigma(K^*_{\mathrm{DR}},\theta)\big(K_{\mathrm{DR}}^* - K \big)^\top B^\top \\
            &\qquad\qquad\qquad\qquad+  B(K - K^*_{\mathrm{DR}})\Sigma(K^*_{\mathrm{DR}},\theta)(K - K^*_{\mathrm{DR}})^\top B^\top \big)
    \end{align*}
    which implies that,
    \begin{align*}
    &\sqrt{\Vert \Sigma(K,\theta)^{-1}(\Sigma(K_{\mathrm{DR}}^*,\theta) - \Sigma(K,\theta))^\top\Sigma(K_{\mathrm{DR}}^*,\theta) - \Sigma(K,\theta))\Sigma(K,\theta)^{-1} \Vert_{\mathrm{op}}} \leq\\
    &\qquad\qquad\Big(\Vert \Sigma(K^*_{\mathrm{DR}},\theta)\Vert_{\op} \Vert K - K_{\mathrm{DR}}^\star\Vert_{\op}^2 \Vert B\Vert_{\op}^2 + 2 \Vert \Sigma(K^*_{\mathrm{DR}},\theta)\Vert_{\op} \Vert A - BK\Vert_{\op}\Vert K - K_{\mathrm{DR}}^\star\Vert_{\op} \Vert B \Vert_{\op} \Big)
    \end{align*}
    and leads to,
    \begin{align*}
    \Vert R(K,K_{\mathrm{DR}}^*) \Vert_F  &\leq \mathbb{E}\Big[ \big\Vert\nabla_K  J(K,\theta)\big\Vert_F  \big( \Vert K - K_{\mathrm{DR}}^*\Vert_{\op}^2 \tau_B^2 + 2 \Vert K - K_{\mathrm{DR}}^*\Vert_{\op} \tau_B \big)J(K,\theta)^{3/2}\Big]\\
    &\leq s^{3/2}J_{\mathrm{DR}}(K)^{3/2}\big( \Vert K - K_{\mathrm{DR}}^*\Vert_{\op}^2 \tau_B^2 + 2 \Vert K - K_{\mathrm{DR}}^*\Vert_{\op} \tau_B \big)\mathbb{E}\Big[ \big\Vert\nabla_K  J(K,\theta)\big\Vert_F  \Big] \\
    &= s^{3/2}J_{\mathrm{DR}}(K)^{3/2}\big( \Vert K - K_{\mathrm{DR}}^*\Vert_{\op} \tau_B + 1\big)\mathbb{E}\Big[ \big\Vert\nabla_K  J(K,\theta)\big\Vert_F  \Big]\Vert K - K_{\mathrm{DR}}^*\Vert_{\op}.
    \end{align*}
    Now, we can bound $K - K_{\mathrm{DR}}^*$ by,
    \begin{align*}
    \Vert K - K_{\mathrm{DR}}^*\Vert_{\op} &\leq \mathbb{E}[\Vert K_{\mathrm{DR}}^* - K^*(\theta)\Vert_{\op}] + \mathbb{E}[\Vert K - K^*(\theta)\Vert_{\op}]\\
    &\leq \sqrt{2 J_{\mathrm{DR}}(K)}\mathbb{E}\big[\big(\Vert \nabla J(K,\theta)\Vert_{\op} + \Vert \nabla J(K_{\mathrm{DR}}^*,\theta) \Vert_{\op} \big)\big].
    \end{align*}
    From here, we see that our heterogeneity and gradient norm assumptions, we obtain that:
    \begin{align*}
    \big\Vert \nabla_K J(K,\theta)\big\Vert \leq \frac{1}{2}\Big(\frac{1}{2^8 J_{\mathrm{DR}}(K)^{5/2}} + \frac{8 s^{13/2}J_{\mathrm{DR}}(K)^{9/2}}{50000 J_{\mathrm{DR}}(K)^6} + \frac{2s^4J_{\mathrm{DR}}(K)^4}{50000^2 J_{\mathrm{DR}}(K)^{12}} \Big) \leq \frac{1}{16 J_{\mathrm{DR}}(K)^{3/2}}.
    \end{align*}
    The only detail that matters is the exponent on $J_{\mathrm{DR}}(K)$ is 3/2, which cancels out $J_{\mathrm{DR}}(K)$ and that denominator is large enough to yield:
    $$
    \Vert R(K,K_{\mathrm{DR}}^*) \Vert_F  \leq \frac{1}{4} \Vert K - K_{\mathrm{DR}}^\star \Vert_F.
    $$
    which implies the desired result.
    \end{proof}
    \end{lemma}
    \subsection{Discount Annealing}
    \noindent To initialize within $\mathcal{S}$, we simply need to employ the discount annealing scheme introduced in Perdormo et al. \cite{PUS21} as discussed in Fujinami et al. \cite{FLMP25A}. To adapt the technique in Fujinami et al., we will employ the notation:
    $$
    \gamma \theta = [\gamma A(\theta), \gamma B(\theta)], \qquad J_{\mathrm{DR}}(K,\gamma) = \mathbb{E}[J(K,\gamma \theta)], \qquad K^*(\gamma) = \arg\min_{K} J_{\mathrm{DR}}(K,\gamma),
    $$
    and adapt the theorems as needed.
    \begin{proposition}[Pedormo et al.  Thm. 1 Adaptation]\label{prop:discount_growth_factor}
    Let $P(K,\gamma\theta)$ satisfy,
    $$
    P(K,\gamma\theta) = Q + SK + K^\top S^\top + KRK^\top + \gamma(A(\theta)-B(\theta)K)^\top P(K,\gamma\theta)(A(\theta)-B(\theta)K)
    $$
    If $ Q + SK + K^\top S^\top + KRK^\top \succeq I$, for,
    $$
    \tilde{\gamma} = \Big(\frac{1}{8 \Vert P(K,\gamma \theta)\Vert_{\mathrm{op}}^4}+ 1\Big)^2 \gamma
    $$
    we have that,
    $$
    \mathrm{tr}( P(K,\tilde{\gamma}\theta )) \leq 2\mathrm{tr}( P(K,\gamma \theta)) 
    $$
\end{proposition}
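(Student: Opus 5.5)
The plan is to follow the discount-growth argument of Perdomo et al.: view $P(K,\gamma\theta)$ as a Lyapunov/Gramian series and compare the series at $\tilde\gamma$ with the series at $\gamma$ term by term. Set $\mathcal{Q}_K := Q+SK+K^\top S^\top+KRK^\top$ and $A_K := A(\theta)-B(\theta)K$. Unrolling the fixed-point equation gives
$$
P(K,\gamma\theta)=\sum_{\ell\ge 0}\gamma^\ell (A_K^\top)^\ell \mathcal{Q}_K A_K^\ell .
$$
The same unrolling at $\tilde\gamma$ gives each term multiplied by $(\tilde\gamma/\gamma)^\ell$. So the task reduces to showing that, with $\alpha := \tilde\gamma/\gamma = (1+\frac{1}{8\|P\|^4})^2$, the weighted series is at most twice the unweighted one in trace.

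The key structural input is a geometric decay bound on $\sqrt{\gamma}A_K$, which we get from $\mathcal{Q}_K\succeq I$. From the Lyapunov equation, $\gamma A_K^\top P A_K = P-\mathcal{Q}_K \preceq P - I \preceq (1-1/\|P\|_{\op})P$. Iterating yields $\gamma^\ell (A_K^\top)^\ell P A_K^\ell \preceq (1-1/\|P\|_{\op})^\ell P$. Consequently
$$
\operatorname{tr}\big(\gamma^\ell (A_K^\top)^\ell \mathcal{Q}_K A_K^\ell\big)\le \operatorname{tr}(\gamma^\ell (A_K^\top)^\ell P A_K^\ell)\le (1-1/\|P\|_{\op})^\ell\operatorname{tr}(P).
$$
This does not compare cleanly with $\operatorname{tr}(P)$ itself, so I would instead work directly with the matrix inequality. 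Write $\tilde P = \sum_\ell \alpha^\ell \gamma^\ell (A_K^\top)^\ell \mathcal{Q}_K A_K^\ell$, which solves $\tilde P = \mathcal{Q}_K + \alpha\gamma A_K^\top \tilde P A_K$. The claim is that $\tilde P\preceq 2P$, or at least that its trace is, when $\alpha(1-1/\|P\|)<1$ with margin.

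Concretely, I would check that $2P$ is a supersolution: $\mathcal{Q}_K + \alpha\gamma A_K^\top(2P)A_K \preceq 2P$. Using $\gamma A_K^\top P A_K = P-\mathcal{Q}_K$, this becomes
$$
\mathcal{Q}_K + 2\alpha(P-\mathcal{Q}_K)\preceq 2P \iff 2(\alpha-1)P \preceq (2\alpha-1)\mathcal{Q}_K .
$$
Since $\mathcal{Q}_K\succeq I$, it suffices that $2(\alpha-1)\|P\|_{\op}\le 2\alpha-1$, which holds whenever $\alpha-1\le 1/(2\|P\|_{\op})$. Then monotonicity of the Lyapunov operator under comparison with a supersolution gives $\tilde P\preceq 2P$, and taking traces finishes the proof. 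The monotonicity holds because the iteration $X\mapsto \mathcal{Q}_K+\alpha\gamma A_K^\top X A_K$ is order preserving and starting from $0$ converges to $\tilde P$. That convergence also follows from the supersolution, since iterates are bounded by $2P$.

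It remains to verify that $\alpha = (1+\frac{1}{8\|P\|^4})^2$ satisfies $\alpha-1\le 1/(2\|P\|)$. We have $\alpha - 1 = \frac{1}{4\|P\|^4}+\frac{1}{64\|P\|^8}$. Because $\|P\|_{\op}\ge \lambda_{\min}(\mathcal{Q}_K)\ge 1$, this is at most $\frac{1}{4\|P\|}+\frac{1}{64\|P\|}<\frac{1}{2\|P\|}$. The main obstacle I foresee is ensuring $\tilde P$ is finite, i.e.\ that $\sqrt{\alpha\gamma}A_K$ is Schur stable. I would get this from the strict supersolution, which gives a Lyapunov certificate $2P\succ0$ with $\alpha\gamma A_K^\top(2P)A_K\preceq 2P-\mathcal{Q}_K\prec 2P$. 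The stated fourth-power choice is much more conservative than needed, so it causes no difficulty. Finally, because the bound is pointwise in $\theta$, it passes to the expectation, so $J_{\mathrm{DR}}(K,\tilde\gamma)\le 2J_{\mathrm{DR}}(K,\gamma)$ if $\|P\|$ is replaced by its supremum over $\Theta$.
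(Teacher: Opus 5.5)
Your proof is correct, and it takes a different route from the paper's. The paper does not prove this proposition itself. It invokes Perdomo et al.'s Theorem 1 and adds only a remark for the $S\neq 0$ case: since $\mathcal{Q}\succeq I$, the closed-loop stage cost satisfies $[I;\,-K]^\top\mathcal{Q}[I;\,-K]\succeq I+K^\top K$, which matches the hypothesis of that theorem.

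You instead give a self-contained comparison argument. You show that $2P(K,\gamma\theta)$ is a supersolution of the Lyapunov equation at the enlarged discount $\alpha\gamma$. Using $\gamma A_K^\top P A_K=P-\mathcal{Q}_K$, the supersolution condition reduces to $2(\alpha-1)P\preceq(2\alpha-1)\mathcal{Q}_K$. This holds as soon as $\alpha-1\le 1/(2\Vert P\Vert_{\mathrm{op}})$. Order preservation of $X\mapsto\mathcal{Q}_K+\alpha\gamma A_K^\top XA_K$, started from $0$, then gives both finiteness and $P(K,\tilde\gamma\theta)\preceq 2P(K,\gamma\theta)$. Your strict Lyapunov certificate gives finiteness as well.

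Every step checks out. In particular, $\Vert P\Vert_{\mathrm{op}}\ge 1$ follows from $P\succeq\mathcal{Q}_K\succeq I$, and this is what lets you absorb the fourth power. Your opening geometric-decay estimate is not used and can be dropped.

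The citation route buys brevity. Your route buys three further things:
\begin{itemize}
\item It gives a Loewner-order conclusion rather than just a trace bound. So it transfers at once to $\mathrm{tr}(P\Sigma_w)$ for any $\Sigma_w\succeq 0$, hence to $J$ and $J_{DR}$, which is what the discount-annealing theorem actually uses.
\item It uses nothing beyond $\mathcal{Q}_K\succeq I$, so the $S\neq0$ case needs no separate remark.
\item It shows the growth factor $(1+\tfrac{1}{8\Vert P\Vert_{\mathrm{op}}^4})^2$ can be replaced by $1+\tfrac{1}{2\Vert P\Vert_{\mathrm{op}}}$. Plugged into the annealing argument, this would reduce the quartic dependence on $J_{DR}(K^\star)$ in the iteration count to a linear one.
\end{itemize}
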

We simply remark that in \cite{FLMP25B} and \cite{PUS21}, the specific choice, $S=0$ is made, but the theorem as established in \cite{PUS21} accommodates $S \neq 0$ when $\mathcal{Q}\succeq I$ because,
$$
\begin{bmatrix}I\\-K\end{bmatrix}^\top \begin{bmatrix}Q - I& S\\S^\top &  R - I\end{bmatrix}\begin{bmatrix}I\\-K\end{bmatrix} = (Q-I) - SK - K^\top S^\top + K^\top (R-I) K \succeq 0 \; \implies \; Q-SK-K^\top S^\top + R \succeq I + K^\top R K.
$$
as required by Perdomo et al. ~\cite[Thm. 1]{PUS21}. Thus, we can extend the discounting annealing scheme to many systems (with $S\neq 0$) by simply adapting Fujinami et al. \cite[Theorem IV.1]{FLMP25B}.
\begin{algorithm}
    \caption{Discount Annealing, DR Adaptation} 
    \label{alg:progressive_discounting}
    \begin{algorithmic}[1]
    \STATE \textbf{Input:} Distribution of systems to sample $\theta_i \overset{\text{i.i.d.}}{\sim} p_{\Theta}$, optimization tolerance $\varepsilon$
    \STATE $K \gets 0$
    \STATE Find the largest $\gamma \in (0, 1]$ satisfying 
    \begin{equation}\label{eq:initial_gamma}
        J_{\mathrm{DR}}(K, \gamma) \leq 8 d_x. 
    \end{equation} 
    \WHILE {$\gamma < 1$}
    \STATE Set $K \gets K'$, where $K'$ satisfies,
    \begin{equation}   
        \label{eq:controller_update}
        J_{\mathrm{DR}}(K', \gamma) - \inf_{\tilde K} J_{\mathrm{DR}}(\tilde K ,\gamma) \leq d_x .
    \end{equation}
    \STATE Find a discount factor $\gamma'\in[\gamma, 1]$ satisfying
    \begin{equation}
        2J_{\mathrm{DR}}(K , \gamma) < J_{\mathrm{DR}}(K , \gamma') \leq 4J_{\mathrm{DR}}(K , \gamma) \label{eq:gamma_search}
    \end{equation}
    \STATE Update the discount factor $\gamma \gets \gamma'$
    \ENDWHILE
    \STATE Run \Cref{alg:dr-mba} starting from $K$ to obtain $K'$ satisfying,
    $$J_{\mathrm{DR}}(K') - \inf_{\tilde K} J_{\mathrm{DR}}(\tilde K) \leq  \varepsilon.$$
    \STATE \textbf{Return} $K'$
    \end{algorithmic}
\end{algorithm}
\begin{theorem}[\cite{FLMP25B} Theorem IV.1 Adaptation]
Let $\gamma_0$ be the largest $\gamma$ satisfying \eqref{eq:initial_gamma}. At the start of any iteration, it holds that $J_{\mathrm{DR}}(K, \gamma) \leq 8 \inf_{\tilde K} J_{SA}(K, \gamma)$. Further, \Cref{alg:progressive_discounting} terminates after at most $1024 J_{\mathrm{DR}}(K^\star)^4\log(1/\gamma_0)$ iterations where the controller $K'$ returned jointly stabilizes all systems in $\Theta$, satisfying $J_{\mathrm{DR}}(K') - J_{\mathrm{DR}}(K^*) \leq \varepsilon$. 
\begin{proof} 
    When $\gamma \in (0,1]$ is picked so that $J_{\mathrm{DR}}(K,\gamma) \leq 8d_x$, we see that
    $$
    J_{\mathrm{DR}}(K,\gamma) \leq 8d_x \leq 8 \mathrm{tr}(I)\leq 8\inf_{\tilde{K}} J_{\mathrm{DR}}(\tilde{K},\gamma),
    $$
    so we have the desired inequality for the first iteration within \cref{eq:initial_gamma}. Next, we find a controller where,
$$
J_{\mathrm{DR}}(K^\prime,\gamma) - \inf_{\tilde{K}} J_{\mathrm{DR}}(\tilde{K},\gamma) \leq d_x = \mathrm{tr}(I).
$$
We then use this controller $K^\prime$ to update the discount factor $\gamma^\prime$ where we ensure that $\gamma^\prime \in [\gamma,1]$:
$$
2 J_{\mathrm{DR}}(K^\prime,\gamma) < J_{\mathrm{DR}}(K^\prime,\gamma^\prime)\leq 4 J_{\mathrm{DR}}(K^\prime,\gamma)
$$
then, $J_{\mathrm{DR}}(K,\gamma) \leq 8 \inf_{\tilde{K}} J_{\mathrm{DR}}(K,\gamma)$ because $J_{\mathrm{DR}}(K^\prime,\gamma)$ is within a factor of 2 of the optimal cost,
\begin{equation}\label{eq:update_inequality}
J_{\mathrm{DR}}(K^\prime,\gamma)  \leq  \mathrm{tr}(I) + \inf_{\tilde{K}} J_{\mathrm{DR}}(\tilde{K},\gamma)  \leq 2\inf_{\tilde{K}} J_{\mathrm{DR}}(\tilde{K},\gamma) \tag{\text{Update Inequality}}
\end{equation}
(by the definition of the cost-to-go), and also,
$$
2 J_{\mathrm{DR}}(K^\prime,\gamma)] < J_{\mathrm{DR}}(K^\prime,\gamma^\prime) \leq 4J_{\mathrm{DR}}(K^\prime,\gamma) \overbrace{\leq 8 \inf_{\tilde{K}} J_{\mathrm{DR}}(\tilde{K},\gamma)}^{\mathclap{\text{Update Inequality}}}  \overbrace{\leq  8 \inf_{\tilde{K}} J_{\mathrm{DR}}(\tilde{K},\gamma^\prime) }^{\mathclap{\text{Cost is nondecreasing in $\gamma$}}}.
$$
This establishes that at the start of each iteration, it holds that $J_{\mathrm{DR}}(K, \gamma) \leq 8 \inf_{\tilde K} J_{SA}(K, \gamma)$. Now, we establish that this procedure will terminate. For this, we appeal to \cref{prop:discount_growth_factor} to analyze the minimum discount factor growth by employing a contradiction; assume for a contradiction that,
$$
\gamma^\prime < \inf_{\theta \in \Theta}\Big(\frac{1}{8 \Vert P(K,\gamma,\theta)\Vert_{\mathrm{op}}^4}+ 1\Big)^2 \gamma.
$$
By \cref{prop:discount_growth_factor} and monotonicity of the cost in the discount factor, we must have that for each cost term,
$$
J(K,\gamma^\prime \theta) \leq 2J(K,\gamma \theta) \qquad \implies \qquad J_{\mathrm{DR}}(K^\prime,\gamma^\prime) \leq 2J_{\mathrm{DR}}(K^\prime, \gamma)
$$
However, Step 6 in \Cref{alg:progressive_discounting} forces the value of $\gamma^\prime$ found to ensure that $J_{\mathrm{DR}}(K^\prime,\gamma^\prime) > 2 J_{\mathrm{DR}}(K^\prime, \gamma)$. This is a contradiction, so we must have that,
$$
\gamma^\prime \geq \inf_{\theta \in \Theta}\Big(\frac{1}{8 \Vert P(K,\gamma,\theta)\Vert_{\mathrm{op}}^4}+ 1\Big)^2 \gamma.
$$
Next, we employ a sequence of inequalities to find the minimum growth, where we see that,
$$
    \Vert P(K,\theta,\gamma) \Vert_{\mathrm{op}}^4 \overbrace{\leq \mathrm{tr}(P(K,\theta,\gamma))^4}^{\text{Trace Norm Inequality}} \overbrace{\leq 2^4J_{\mathrm{DR}}(K^*(\gamma),\gamma))}^{\text{$(\mathbf{B},s)$-boundedness}}\overbrace{\leq 2^8J_{\mathrm{DR}}(K^*(\gamma),\gamma^\prime)^4}^\text{Update Inequality} \overbrace{\leq 2^8J_{\mathrm{DR}}(K^*_{\mathrm{DR}},\gamma^\prime)^4}^{\text{Definition of $K^*(\gamma)$}} \overbrace{\leq  2^8J_{\mathrm{DR}}(K^*_{\mathrm{DR}})^4}^{\text{Monotonicity in $\gamma$}}.
$$
From here, we can take the reciprocal (and reverse the inequality directions) to obtain that:
$$
\frac{1}{\Vert P(K,\theta_i,\gamma) \Vert_{\mathrm{op}}^4} \geq \frac{1}{\mathrm{tr}(P(K,\theta_i,\gamma))^4} \geq \frac{1}{2^4J_{\mathrm{DR}}(K^\star(\gamma),\gamma)^4} \geq \frac{1}{2^8J_{\mathrm{DR}}(K^\star)^4}
$$
and therefore,
$$
\left(\frac{1}{8\Vert P(K,\theta,\gamma) \Vert_{\mathrm{op}}^4} + 1 \right)^2\geq \left(\frac{1}{8\mathrm{tr}(P(K,\theta,\gamma))^4}+ 1 \right)^2 \geq \left(\frac{1}{8\cdot2^4J_{\mathrm{DR}}(K,\gamma)^4}+ 1 \right)^2 \geq \left(\frac{1}{8\cdot2^8J_{\mathrm{DR}}(K^\star)^4}+ 1\right)^2.
$$
Inverting the minimum discount-factor growth expression (the right-most term) yields the result.
\end{proof}
\end{theorem}
\subsection{Lipschitz, and Gradient Norm/Variance Bounds}
Thematically, many of the relationships we derive simply come from vectorizing the computations done in \cite{FGKM18} and across the literature.  results in making substitutions of $\mathcal{Q} \leftarrow R$ and $\mathcal{Q} \leftarrow Q$.
\begin{lemma}[Useful Inequalities]
\begin{align*}
    \Vert P(K,\theta) \Vert_{\mathrm{op}} &\leq \frac{J(K,\theta)}{\sigma_{\mathrm{min}}(\Sigma_w
    )}\\
    \Vert \Sigma(K,\theta) \Vert_{\mathrm{op}} &\leq \frac{J(K,\theta)}{\sigma_{\mathrm{min}}\left(\mathcal{Q}\right) }\\
    \big\Vert \begin{bmatrix}I & K^\top\end{bmatrix} \big\Vert_{\mathrm{op}}^2&\leq \frac{J(K,\theta)}{\sigma_{\mathrm{min}}\left(\mathcal{Q}\right) \sigma_{\mathrm{min}}(\Sigma_w
    )}\\
    \Vert E(K,\theta) \Vert_{\mathrm{op}} &\leq \left(\left\Vert\mathcal{Q}\right\Vert_{\mathrm{op}} + \big\Vert\begin{bmatrix}A & B\end{bmatrix}\big\Vert_{\mathrm{op}}^2\Vert P(K,\theta) \Vert_{\mathrm{op}} \right)\left\Vert\begin{bmatrix}I\\-K\end{bmatrix}\right\Vert_{\mathrm{op}}\\
    \Vert R + B^*P(K)B\Vert_{\mathrm{op}} &\leq \left(\left\Vert\mathcal{Q}\right\Vert_{\mathrm{op}} + \big\Vert\begin{bmatrix}A & B\end{bmatrix}\big\Vert_{\mathrm{op}}^2\Vert P(K,\theta) \Vert_{\mathrm{op}} \right)\\
    \Vert B \Delta \Sigma(K,\theta)(A-BK)\Vert_{\mathrm{F}} &\leq \big\Vert\begin{bmatrix}A & B\end{bmatrix} \big\Vert_{\mathrm{op}}^2 \Vert\Sigma(K,\theta)\Vert_{\mathrm{op}}\big\Vert\begin{bmatrix}I & -K^\top\end{bmatrix} \big\Vert_{\mathrm{op}} \Vert \Delta \Vert_F\\
    \Vert\mathrm{dlyap}(A-BK,\cdot)\Vert_{\mathrm{op}} &\leq \frac{J(K,\theta)}{\sigma_{\mathrm{min}}(\Sigma_w)\sigma_{\mathrm{min}}\left(\mathcal{Q}\right)}
\end{align*}
\begin{proof}
The first three inequalities follow by the singular-value rearrangement inequality and the fact that $A,B\succeq 0$.
$$
\mathrm{tr}(AB) \geq \sigma_{\mathrm{max}}(A)\sigma_{\mathrm{min}}(B).
$$
Applying this to,
$$
J(K,\theta) = \mathrm{tr}(P(K,\theta),\Sigma_w) = \mathrm{tr}\left(\begin{bmatrix}I\\K\end{bmatrix}^\top \begin{bmatrix}Q & S\\S^\top & R\end{bmatrix}\begin{bmatrix}I\\K\end{bmatrix}\Sigma(K,\theta)\right) = \mathrm{tr}\left(\begin{bmatrix}I\\K\end{bmatrix}^\top \mathcal{Q}\begin{bmatrix}I\\K\end{bmatrix}\Sigma(K,\theta)\right)
$$
yields the first three results. The second three results follow by the block-matrix identites,
    \begin{align*}
    E(K,\theta) &= -\begin{bmatrix}0 & I\end{bmatrix}\left( \mathcal{Q} + \begin{bmatrix}A^\top\\B^\top\end{bmatrix}P(K)\begin{bmatrix}A&B
    \end{bmatrix}\right)\begin{bmatrix}I\\-K\end{bmatrix}\\
    R + B^\top P(K,\theta)B &= \begin{bmatrix}0 & I\end{bmatrix}\left(\mathcal{Q}+ \begin{bmatrix}A^\top \\B^\top\end{bmatrix}P(K)\begin{bmatrix}A&B \end{bmatrix}\right)\begin{bmatrix}0\\I\end{bmatrix}\\
    B \Delta \Sigma(K,\theta)(A-BK) &= \begin{bmatrix}A & B\end{bmatrix}\begin{bmatrix}0\\ I\end{bmatrix}\Delta \Sigma(K) \begin{bmatrix}I & -K^\top\end{bmatrix}\begin{bmatrix}
        A^\top\\ B^\top
    \end{bmatrix}
    \end{align*}
    followed by judicious application of operator norm bounds. The very last inequality follows by the exact same argument as in Fazel et al. \cite{FGKM18}
\end{proof}
\end{lemma}
\noindent These inequalities, along with \cref{lem:bs_boundedness}, suffice to for finding bounds on the Lipschitz constant. Once again, we also introduce,
\begin{equation}\label{eq:largest_system}
    \bar{\theta} = \sup_{\theta \in \Theta} \Vert \theta \Vert
\end{equation}
to simplify the notation.
\begin{corollary}[Lipschitz Bound] \label{cor:hessian_lipschitz_characterization}
    $$
    L_K \leq 4\left(\left\Vert\mathcal{Q}\right\Vert_{\mathrm{op}} + \frac{2\big\Vert\bar{\theta} \big\Vert_{\mathrm{op}}^2 J_{DR}(K_0)}{\sigma_{\mathrm{min}}(\Sigma_w)} \right)  \left(1 + \frac{4\Vert \bar{\theta}\big\Vert_{\mathrm{op}}^2J_{DR}(K_0)}{\sigma_{\mathrm{min}}(\Sigma_w)^2} \right)\frac{2J_{DR}(K_0)}{\sigma_{\mathrm{min}}(\mathcal{Q}) }
    $$
\end{corollary}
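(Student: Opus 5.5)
We bound the Hessian operator norm of the DR cost pointwise in $\theta$ and then pass the bound through the expectation. Since $\|\mathbb{E}[D_K^2 J(K,\theta)]\|_{\op}\le \mathbb{E}\|D_K^2J(K,\theta)\|_{\op}$, it suffices to produce a bound on $\|D_K^2 J(K,\theta)\|_{\op}$ that is affine in $J(K,\theta)$ after the polynomial factors in $J$ have been replaced by uniform constants.

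We start from the Hessian quadratic form of Bu et al.~\cite{BMFM19}. For a direction $\Delta$ with $\|\Delta\|_F=1$,
\[
\nabla^2 J(K,\theta)[\Delta,\Delta] = 2\langle \Delta (R+B^\top P B)\Delta,\Sigma\rangle - 4\langle \Delta^\top B^\top P'[\Delta](A-BK),\Sigma\rangle ,
\]
where $P'[\Delta]=\dlyap\big((A-BK)^\top, \text{sym}(\Delta^\top E)\big)$, i.e., the directional derivative of $P$.

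The first term is bounded by $2\|R+B^\top PB\|_{\op}\|\Sigma\|_{\op}$. By the Useful Inequalities lemma, this gives
\[
2\bigl(\|\mathcal{Q}\|_{\op}+\|\bar\theta\|^2_{\op}\|P\|_{\op}\bigr)\,\frac{J(K,\theta)}{\sigma_{\min}(\mathcal{Q})}.
\]

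For the second term we first bound $\|P'[\Delta]\|_{\op}$. We use $\|\dlyap(A-BK,\cdot)\|_{\op}\le J/(\sigma_{\min}(\Sigma_w)\sigma_{\min}(\mathcal{Q}))$ together with $\|E\|_{\op}\le(\|\mathcal{Q}\|_{\op}+\|\bar\theta\|_{\op}^2\|P\|_{\op})\,\|[I;-K]\|_{\op}$. We then use the inequality $\|B\Delta\Sigma(A-BK)\|_F\le \|\bar\theta\|_{\op}^2\|\Sigma\|_{\op}\|[I\ -K^\top]\|_{\op}$ and the bound $\|[I\ K^\top]\|^2_{\op}\le J/(\sigma_{\min}(\mathcal{Q})\sigma_{\min}(\Sigma_w))$. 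Collecting factors, the second term is controlled by the common factor $(\|\mathcal{Q}\|_{\op}+\|\bar\theta\|^2_{\op}\|P\|_{\op})\cdot \|\Sigma\|_{\op}$, multiplied by something of the form $\|\bar\theta\|^2_{\op} J/\sigma_{\min}(\Sigma_w)^2$. Summing the two terms gives the product structure
\[
4\bigl(\|\mathcal{Q}\|_{\op}+\|\bar\theta\|^2 \|P\|_{\op}\bigr)\bigl(1+c\,\|\bar\theta\|^2 J/\sigma_{\min}(\Sigma_w)^2\bigr)\,\frac{J}{\sigma_{\min}(\mathcal{Q})}.
\]

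The remaining task is to replace each occurrence of $J(K,\theta)$ by a $\theta$-independent quantity. For this we use the $(\mathbf{B},s)$-boundedness of \cref{lem:bs_boundedness}. For $K$ in the relevant sublevel set, $J(K,\theta)\le 2J_{DR}(K)$. Because the iterates descend, $J_{DR}(K)\le J_{DR}(K_0)$, so $J(K,\theta)\le 2J_{DR}(K_0)$ uniformly in $\theta$. This produces the factors $2J_{DR}(K_0)$ and, through $\|P\|_{\op}\le J/\sigma_{\min}(\Sigma_w)$, the $2\|\bar\theta\|^2J_{DR}(K_0)/\sigma_{\min}(\Sigma_w)$ term. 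Taking the expectation of a $\theta$-independent bound is then trivial, and we obtain the stated expression for $L_K$.

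The main obstacle is this uniform-in-$\theta$ replacement. A naive bound via Jensen fails, because the Hessian bound is a polynomial of degree greater than one in $J(K,\theta)$ and $\mathbb{E}[J^3]$ is not controlled by $J_{DR}$. The pointwise $(\mathbf{B},s)$ bound, valid under the heterogeneity assumption on the sublevel set containing the iterates, is therefore essential. If we only assumed $K\in S_c$, we would instead need the cruder bound $\sup_\theta J(K,\theta)\le$ (a constant depending on $c$ and $\diam(\Theta)$). The second place where care is needed is the constant bookkeeping in the $P'[\Delta]$ term, where I would track only the product structure and absorb the numerical factors into the leading $4$ and the inner $4$.
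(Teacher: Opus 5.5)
Your route is the one the paper itself points to: Bu et al.'s Hessian, the Useful Inequalities, then $(\mathbf{B},s)$-boundedness to replace $J(K,\theta)$ by $2J_{DR}(K)\le 2J_{DR}(K_0)$. The paper gives no more detail than that, and your pointwise-then-uniform treatment of the expectation is sound. The gap is the step you defer to bookkeeping.

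For the second Hessian term, $4\langle P'[\Delta],\,B\Delta\Sigma(A-BK)^\top\rangle$, you list three ingredients:
\begin{itemize}
\item $\Vert\dlyap(\cdot)\Vert\le J/(\sigma_{\min}(\Sigma_w)\sigma_{\min}(\mathcal{Q}))$;
\item $\Vert E\Vert_{\op}\le(\Vert\mathcal Q\Vert_{\op}+\Vert\bar\theta\Vert_{\op}^2\Vert P\Vert_{\op})\Vert[\,I\ \ -K^\top]\Vert_{\op}$;
\item $\Vert B\Delta\Sigma(A-BK)\Vert_F\le\Vert\bar\theta\Vert_{\op}^2\Vert\Sigma\Vert_{\op}\Vert[\,I\ \ -K^\top]\Vert_{\op}$.
\end{itemize}
Up to constants, their product is $(\Vert\mathcal Q\Vert_{\op}+\Vert\bar\theta\Vert_{\op}^2\Vert P\Vert_{\op})\Vert\Sigma\Vert_{\op}\cdot\Vert\bar\theta\Vert_{\op}^2\,\tfrac{J}{\sigma_{\min}(\Sigma_w)\sigma_{\min}(\mathcal Q)}\,\Vert[\,I\ \ -K^\top]\Vert_{\op}^2$. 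Using $\Vert[\,I\ \ -K^\top]\Vert_{\op}^2\le J/(\sigma_{\min}(\mathcal Q)\sigma_{\min}(\Sigma_w))$, the inner factor is $\Vert\bar\theta\Vert_{\op}^2J^2/(\sigma_{\min}(\Sigma_w)^2\sigma_{\min}(\mathcal Q)^2)$, not $\Vert\bar\theta\Vert_{\op}^2J/\sigma_{\min}(\Sigma_w)^2$. You have dropped a factor $J/\sigma_{\min}(\mathcal Q)^2$. That is not a numerical constant and cannot be absorbed into the two 4's. (Minor: the first Hessian term should read $2\langle\Delta^\top(R+B^\top PB)\Delta,\Sigma\rangle$.)

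No other bookkeeping rescues the stated form, because its right-hand side scales incorrectly.
\begin{itemize}
\item Replace $\Sigma_w$ by $\alpha\Sigma_w$ with $\alpha\ge 1$. This leaves $P$, $K^\star$ and $J_{DR}(K)/J_{DR}(K^\star)$ unchanged, and multiplies $\Sigma$, $J$, $J_{DR}$ and the Hessian by exactly $\alpha$.
\item The claimed bound, however, is multiplied only by $(\alpha+y)/(1+y)<\alpha$, where $y=4\Vert\bar\theta\Vert_{\op}^2J_{DR}(K_0)/\sigma_{\min}(\Sigma_w)^2$.
\end{itemize}
Concretely, take the scalar system $a=b=1$, $Q=R=1$, $S=0$, $\Sigma_w=100$, with $\Theta$ a single point (or, by continuity, a sufficiently small ball). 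Then $J(k)=100\bigl(-1+\tfrac{1}{2k}+\tfrac{5}{2(2-k)}\bigr)$ on $(0,2)$, so $J''(k)=100\bigl(k^{-3}+5(2-k)^{-3}\bigr)$. At $K_0=k=0.05$:
\begin{itemize}
\item $J\approx 1028$ and $J/J^\star\approx 6.35<8$, so $(\mathbf{B},s)$-boundedness applies;
\item $J''\approx 8.0\times 10^5$, while the stated bound evaluates to about $6.3\times 10^5$.
\end{itemize}
So the paper's one-line justification has the same hole as your sketch. What this route actually proves is the same product structure with inner factor $1+c\,\Vert\bar\theta\Vert_{\op}^2J_{DR}(K_0)^2/(\sigma_{\min}(\Sigma_w)^2\sigma_{\min}(\mathcal Q)^2)$. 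The corollary as written does not follow.
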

\begin{lemma}[Useful Derivatives]
    \begin{align*}
        D_{\theta}E(K,\theta)[\Delta_\theta] &= -\begin{bmatrix}0 & I\end{bmatrix}\left( \begin{bmatrix}\Delta_A^\top\\\Delta_B^\top\end{bmatrix}P(K,\theta)\begin{bmatrix}A&B
    \end{bmatrix}\right)\begin{bmatrix}I\\-K\end{bmatrix}\\
    &\qquad + \begin{bmatrix}0 & I\end{bmatrix}\left(  \begin{bmatrix}A^\top\\B^\top\end{bmatrix}D_{\theta}P(K,\theta)[\Delta_\theta]\begin{bmatrix}A&B
    \end{bmatrix}\right)\begin{bmatrix}I\\-K\end{bmatrix}\\
    &\qquad + \begin{bmatrix}0 & I\end{bmatrix}\left( \begin{bmatrix} A^\top\\B^\top\end{bmatrix}P(K,\theta)\begin{bmatrix}\Delta_A&\Delta_B
    \end{bmatrix}\right)\begin{bmatrix}I\\-K\end{bmatrix}\\
    D_{\theta}P(K,\theta)[\Delta_\theta] &= \mathrm{dlyap}\Big((A-BK)^*, \begin{bmatrix}I\\-K\end{bmatrix}^*\begin{bmatrix}A & B\end{bmatrix}^*P(K,\theta)\begin{bmatrix}\Delta_A & \Delta_B\end{bmatrix}\begin{bmatrix}I\\-K\end{bmatrix}\Big)\\
    &\qquad + \mathrm{dlyap}\Big((A-BK)^*,\begin{bmatrix}I\\-K\end{bmatrix}^*\begin{bmatrix}\Delta_A & \Delta_B\end{bmatrix}^*P(K,\theta)\begin{bmatrix} A & B\end{bmatrix}\begin{bmatrix}I\\-K\end{bmatrix}\Big)\\
    D_{\theta} \Sigma(K,\theta)[\Delta_\theta] &= \mathrm{dlyap}\left(A-BK, \begin{bmatrix}A & B\end{bmatrix}\begin{bmatrix}I\\-K\end{bmatrix}\Sigma(K,\theta)\begin{bmatrix}I\\-K\end{bmatrix}^{\top}\begin{bmatrix}\Delta_A & \Delta_B\end{bmatrix}^{\top} \right)\\
    &\qquad +\mathrm{dlyap}\left(A-BK, \begin{bmatrix}\Delta_A & \Delta_B\end{bmatrix}\begin{bmatrix}I\\-K\end{bmatrix}\Sigma(K,\theta)\begin{bmatrix}I\\-K\end{bmatrix}^{\top}\begin{bmatrix}A & B\end{bmatrix}^{\top}\right)
    \end{align*}
\end{lemma}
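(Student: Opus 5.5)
We derive each of the three Fréchet derivatives by implicit differentiation of the defining fixed-point (Lyapunov) equations. Fix $K$ with $A-BK$ Schur stable for the given $\theta$. Then the maps $\theta\mapsto P(K,\theta)$ and $\theta\mapsto\Sigma(K,\theta)$ are real-analytic in a neighborhood of $\theta$, because $\dlyap$ is a Neumann series that converges uniformly near a stable closed loop. This justifies differentiating the Lyapunov equations term by term. Throughout, write $A_K=A-BK=\begin{bmatrix}A & B\end{bmatrix}\begin{bmatrix}I\\-K\end{bmatrix}$. The variation of the closed-loop matrix is then linear: $D_\theta A_K[\Delta_\theta]=\begin{bmatrix}\Delta_A & \Delta_B\end{bmatrix}\begin{bmatrix}I\\-K\end{bmatrix}$.

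\textbf{Step 1 ($D_\theta P$).} Start from $P=A_K^\top P A_K+\begin{bmatrix}I\\-K\end{bmatrix}^\top\mathcal{Q}\begin{bmatrix}I\\-K\end{bmatrix}$. The stage-cost term does not depend on $\theta$. Differentiating in direction $\Delta_\theta$ with the product rule gives
\begin{equation*}
D P = A_K^\top (DP) A_K + \begin{bmatrix}I\\-K\end{bmatrix}^\top\begin{bmatrix}\Delta_A & \Delta_B\end{bmatrix}^\top P A_K + A_K^\top P\begin{bmatrix}\Delta_A & \Delta_B\end{bmatrix}\begin{bmatrix}I\\-K\end{bmatrix}.
\end{equation*}
This is a Lyapunov equation in $DP$ with the same stable operator $A_K^\top$, so it has the unique solution $\dlyap(A_K^\top,\cdot)$ applied to the forcing. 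Since $\dlyap$ is linear in its second argument, the solution splits into the two stated terms.

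\textbf{Step 2 ($D_\theta\Sigma$).} Run the same argument on $\Sigma=A_K\Sigma A_K^\top+\Sigma_w$, where $\Sigma_w$ is independent of $\theta$. The forcing terms are $(DA_K)\Sigma A_K^\top$ and its transpose $A_K\Sigma (DA_K)^\top$. Substituting the factored form of $A_K$ gives exactly the two $\dlyap(A-BK,\cdot)$ terms in the statement.

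\textbf{Step 3 ($D_\theta E$).} Use the block identity from the Useful Inequalities lemma:
\begin{equation*}
E=-\begin{bmatrix}0 & I\end{bmatrix}\Big(\mathcal{Q}+\begin{bmatrix}A & B\end{bmatrix}^\top P\begin{bmatrix}A & B\end{bmatrix}\Big)\begin{bmatrix}I\\-K\end{bmatrix}.
\end{equation*}
The only $\theta$-dependence sits in the triple product $\begin{bmatrix}A & B\end{bmatrix}^\top P\begin{bmatrix}A & B\end{bmatrix}$. The product rule therefore yields three terms: one with $\Delta$ in the left factor, one with $DP$ from Step 1 in the middle, and one with $\Delta$ in the right factor. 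Each is sandwiched between $\begin{bmatrix}0 & I\end{bmatrix}$ and $\begin{bmatrix}I\\-K\end{bmatrix}$ and carries the overall minus sign.

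The main obstacle is reconciling sign and transpose conventions. The paper writes $P=\dlyap((A-BK)^\top,\cdot)$ for the cost-to-go but $\Sigma=\dlyap(A-BK,\cdot)$ for the covariance, and uses $^*$ and $^\top$ interchangeably. The minus sign prefacing $E$ should propagate uniformly to all three terms of $D_\theta E$. In the stated formula only the first term visibly carries it, so I would check this against the definition $E=(R+B^\top PB)K-B^\top PA-S^\top$. If the displayed signs disagree with that definition, I would present the version obtained from the derivation, which is the one that matters for the later mean-value-theorem bounds on $\bar G$ and $\sigma^2$, since only norms of these derivatives are used there.
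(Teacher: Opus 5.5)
Your proposal is correct and follows the same route the paper relies on: implicit differentiation of the two Lyapunov equations plus the product rule on the block form of $E$ (the paper states this lemma without proof). Your covariance convention $\Sigma=(A-BK)\Sigma(A-BK)^\top+\Sigma_w$ is the one the stated $D_\theta\Sigma$ formula requires, even though the main-text shorthand reads $\dlyap((A-BK)^\top,\Sigma_w)$.

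Your suspicion about signs is also right. Expanding $E=(R+B^\top PB)K-B^\top PA-S^\top$ gives $D_\theta E[\Delta_\theta]=-\Delta_B^\top P(A-BK)-B^\top (D_\theta P[\Delta_\theta])(A-BK)-B^\top P(\Delta_A-\Delta_B K)$. So the second and third displayed terms should also carry a minus sign; this is harmless downstream, since only their norms enter the Lipschitz System Perturbation Bound.
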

\noindent Similarly, we can find bounds on the operator norm bound, which are used twice to obtain \cref{lem:matrix_bernstein}.
\begin{lemma}[Lipschitz System Perturbation Bound]\label{lem:var_poly_bound}
    \begin{align*}
    &\Vert D_\theta \Big(2E(K,\theta) \Sigma(K,\theta)\Big)[\Delta_{\theta}]\Vert_{\mathrm{op}} = \Vert2 D_\theta E(K,\theta)[\Delta_\theta] \Sigma(K,\theta) + 2 E(K,\theta) D_{\theta} \Sigma(K,\theta)[\Delta_\theta]\Vert_{\mathrm{op}}\\
    &\qquad \leq 4\left(\frac{2^{5/2}J_{\mathrm{DR}}(K_0)^{5/2} \big\Vert \bar{\theta}\big\Vert_{\mathrm{op}}}{\sigma_{\mathrm{min}}(\Sigma_w)^{3/2}\sigma_{\mathrm{min}}(\mathcal{Q})^{3/2} } + \frac{2^{7/2}J_{\mathrm{DR}}(K_0)^{7/2}\big\Vert\bar{\theta}\big\Vert_{\mathrm{op}}\Vert \mathcal{Q}\Vert_{\mathrm{op}} }{\sigma_{\mathrm{min}}(\Sigma_w)^{5/2}\sigma_{\mathrm{min}}(\mathcal{Q})^{7/2}} + \frac{2^{11/2}J_{\mathrm{DR}}(K)^{9/2}\big\Vert\bar{\theta}\big\Vert^3_{\mathrm{op}}}{\sigma_{\mathrm{min}}(\Sigma_w)^{7/2}\sigma_{\mathrm{min}}(\mathcal{Q})^{7/2}}\right)
    \end{align*}
    \begin{proof} We bound the expression termwise using the triangle inequality. The first term can be bounded above by,
        \begin{align*}
            &\Vert D_{\theta} E(K,\theta)[\Delta_\theta] \Sigma(K,\theta)\Vert_F \leq 2 \big\Vert P(K,\theta)\big\Vert_{\mathrm{op}}\big\Vert \! \begin{bmatrix}A & B\end{bmatrix}\!\big\Vert_{\mathrm{op}}\big\Vert \! \begin{bmatrix}I & K^\top\end{bmatrix}\!\big\Vert_{\mathrm{op}}\Vert \Sigma(K,\theta)\Vert_{\mathrm{op}}\Vert \Delta_{\theta}\Vert_F\\
            &\qquad + 2 \big\Vert \! \begin{bmatrix}A & B\end{bmatrix}\!\big\Vert_{\mathrm{op}}^3\big\Vert \! \begin{bmatrix}I & K^\top\end{bmatrix}\!\big\Vert_{\mathrm{op}}^3\Vert \Sigma(K,\theta)\Vert_{\mathrm{op}} \Vert P(K,\theta)\Vert_{\mathrm{op}}  \big\Vert \mathrm{dlyap}(A-BK,\cdot) \Vert_{\mathrm{op}} \Vert\Delta_{\theta}\Vert_F\\
            &\phantom{\Vert D_{\theta} E(K,\theta)[\Delta_\theta] \Sigma(K,\theta)\Vert_F} \leq \left(2 \frac{J(K,\theta)^{5/2} \big\Vert \!\begin{bmatrix}
            A & B\end{bmatrix} \!\big\Vert_{\mathrm{op}}}{\sigma_{\mathrm{min}}(\Sigma_w)^{3/2}\sigma_{\mathrm{min}}(\mathcal{Q})^{3/2} } + 2 \frac{J(K,\theta)^{9/2}\big\Vert\!\begin{bmatrix}A & B\end{bmatrix}\!\big\Vert^3_{\mathrm{op}}}{\sigma_{\mathrm{min}}(\Sigma_w)^{7/2}\sigma_{\mathrm{min}}(\mathcal{Q})^{7/2}}\right)\Vert \Delta_\theta\Vert_{\mathrm{F}}.
        \end{align*}
        The second term can be bounded above by,
        \begin{align*}
            &\Vert E(K,\theta) D_{\theta} \Sigma(K,\theta)[\Delta_\theta]\Vert_{\mathrm{F}} \leq \\
            &\qquad\leq 2(\Vert \mathcal{Q}\Vert_{\mathrm{op}} + \big\Vert \!\begin{bmatrix}A & B\end{bmatrix} \!\big\Vert_{\mathrm{op}}^2\Vert P(K,\theta) \Vert_{\mathrm{op}})\big\Vert \!\begin{bmatrix}I & -K^\top\end{bmatrix}\!\big\Vert_{\mathrm{op}}^3\big\Vert\mathrm{dlyap}(A-BK,\cdot)\big\Vert_{\mathrm{op}}\big\Vert\!\begin{bmatrix}A & B\end{bmatrix}\!\big\Vert_{\mathrm{op}}\Vert \Sigma(K,\theta)\Vert_{\mathrm{op}}\Vert \Delta_\theta\Vert_{\mathrm{F}}\\
            &\qquad\leq 2\left(\Vert \mathcal{Q}\Vert_{\mathrm{op}} + \big\Vert \!\begin{bmatrix}A & B\end{bmatrix} \!\big\Vert_{\mathrm{op}}^2\frac{J(K,\theta)}{\sigma_{\mathrm{min}}(\Sigma_w) }\right)\frac{J(K,\theta)^{7/2}}{\sigma_{\mathrm{min}}(\Sigma_w)^{5/2}\sigma_{\mathrm{min}}(\mathcal{Q})^{7/2}}\big\Vert\!\begin{bmatrix}A & B\end{bmatrix}\!\big\Vert_{\mathrm{op}}\Vert \Delta_\theta\Vert_{\mathrm{F}}.
            \end{align*}
            Combining the two yields, and assuming $\Vert \Delta_\theta\Vert_{\mathrm{F}}=1$ immediately yields the operator norm bound:
            \begin{align*}
                \Vert D_\theta \Big(2E(K,\theta) \Sigma(K,\theta)\Big)[\Delta_{\theta}]\Vert_{\mathrm{op}} &\leq 4\left(\frac{J(K,\theta)^{5/2} \big\Vert \!\begin{bmatrix}
                A & B\end{bmatrix} \!\big\Vert_{\mathrm{op}}}{\sigma_{\mathrm{min}}(\Sigma_w)^{3/2}\sigma_{\mathrm{min}}(\mathcal{Q})^{3/2} } + \frac{J(K,\theta)^{9/2}\big\Vert\!\begin{bmatrix}A & B\end{bmatrix}\!\big\Vert^3_{\mathrm{op}}}{\sigma_{\mathrm{min}}(\Sigma_w)^{7/2}\sigma_{\mathrm{min}}(\mathcal{Q})^{7/2}}\right)\\
                &\qquad + 4\left(\Vert \mathcal{Q}\Vert_{\mathrm{op}} + \big\Vert \!\begin{bmatrix}A & B\end{bmatrix} \!\big\Vert_{\mathrm{op}}^2\frac{J(K,\theta)}{\sigma_{\mathrm{min}}(\Sigma_w) }\right)\frac{J(K,\theta)^{7/2}\big\Vert\!\begin{bmatrix}A & B\end{bmatrix}\!\big\Vert_{\mathrm{op}} }{\sigma_{\mathrm{min}}(\Sigma_w)^{5/2}\sigma_{\mathrm{min}}(\mathcal{Q})^{7/2}}
            \end{align*}
            A minor simplification yields the result.
    \end{proof}
\end{lemma}
\subsection{Matrix Bernstein Bounds}
The bounds found in the previous section are with respect to the Frobenius norm, while the Matrix Bernstein concentration inequalities are with respect to the operator norm. These discrepancies in norm formulations occur frequently, and Fazel et al. \cite{FGKM18} simply use vector Bernstein bounds to tackle concentration since $\Vert X \Vert_{\op} \leq \Vert X \Vert_{\mathrm{F}} = \Vert \vect(X) \Vert_{2}$. For the sake of completeness, we demonstrate how to derive the bound given in the main text.
\begin{lemma}[Matrix Bernstein Inequality] Let $G_i$ be defined as in \eqref{eq:sym_bernstein}, and define,
$$
\begin{gathered}
\Vert G_i\Vert_{\mathrm{F}} \leq \bar{G}, \qquad \mathbb{E}[\Vert G_i \Vert_F^2] \leq \sigma^2
\end{gathered}
$$
Then\footnote{We remark that compactness of sublevel-sets and continuity of $J_{\text{DR}}$ ensure that $\bar{G}$ and $\sigma^2$ are finite.},
\begin{align*}
\Prob\left\{ \Vert G_M\Vert_{F} \geq \varepsilon_{\nabla}  \right\} \!\leq \!2(n_xn_u\!+\!1)\!\exp\!\left(\!\frac{-M \varepsilon_\nabla^2}{4 (\sqrt{2}\sigma^2 \!+\! \bar{G}\varepsilon_\nabla/3) }\!\right)\!,
\end{align*}
so if we pick $M$ as,
$$
M \geq \frac{4(\sqrt{2}\sigma^2 + \bar{G}\varepsilon_\nabla/3)}{\varepsilon_{\nabla}^2}\log\left(\frac{2(n_xn_u+1)}{\delta}\right).
$$
with probability at least $1-\delta$, each gradient-estimation error is bounded above by $\varepsilon_\nabla$.
\begin{proof}
First, we transform the Frobenius norm into an operator norm so that we may use the rectangular matrix bernstein bound found in Vershynin \cite{V18} Exercise 5.4.15.
\begin{align*}
\Prob\left(\Vert g(K) - \nabla J_{\mathrm{DR}}(K) \Vert_{\mathrm{F}} \geq \varepsilon_\nabla\right) &= \Prob\left(\sqrt{2}\Vert g(K) - \nabla J_{\mathrm{DR}}(K) \Vert_{\mathrm{op}} \geq \varepsilon_\nabla\right)=\Prob\left(\frac{\sqrt{2}}{M}\Big\Vert \sum_{i=1}^M G_i \Big\Vert_{\mathrm{op}} \geq \varepsilon_\nabla\right)\\
&=\Prob\left(\Big\Vert \sum_{i=1}^M G_i \Big\Vert_{\mathrm{op}} \geq \frac{M}{\sqrt{2}}\varepsilon_\nabla\right)
\end{align*}
In order to use the matrix Bernstein inequality, we need to bound $\Vert G_i\Vert_{\mathrm{op}}$ and $\mathbb{E}[\Vert G_M^2\Vert_{\mathrm{op}}]$. First, we see that $\Vert G_i\Vert_{\mathrm{op}}$ can be evaluated directly by obtaining its nontrivial eigenvalues, which are $\pm \Vert \nabla_K (J(K,\theta_i) - \mathbb{E}[J(K,\theta)]\Vert_F = \Vert G_i \Vert_F$, and so,
$$
\Vert G_i\Vert_{\mathrm{op}} = \Vert G_i\Vert_{\mathrm{F}} \leq \bar{G}.
$$
Next, to evaluate $\mathbb{E}[\Vert G_M^2\Vert_{\mathrm{op}}]/M$, let $v(\theta_i) = \vect\big(\nabla_K (J(K,\theta_i) - \mathbb{E}J(K,\theta))\big)$. 
\begin{align*}
    \mathbb{E}[G_M^2] 
    &= M\begin{bmatrix} \mathbb{E}[v(\theta)v(\theta)^\top] \\ & \mathbb{E}[v(\theta)^\top v(\theta)]
    \end{bmatrix}
    \end{align*}
where we make judicious use of the i.i.d. assumption on $\theta_i$. To find the operator norm, we observe that:
    \begin{align*}
    \Vert \mathbb{E}[
    G_M^2] \Vert_{\op} &=  M\max\left\{ \Vert \mathbb{E}[v(\theta)v(\theta)^\top]\Vert_{\op}], \mathbb{E}[ v(\theta)^\top v(\theta)]\right\}.
    \end{align*}
    Only the second term matters because,
    \begin{align*}
    \mathbb{E}[v^\top v] &= \mathbb{E}[\tr( v^\top v )] = \mathbb{E}[\tr( v v^\top )] = \tr(\mathbb{E}[vv^\top])= \sigma_1(\mathbb{E}[vv^\top]) + \ldots + \sigma_n(\mathbb{E}[vv^\top]) \geq \Vert \mathbb{E}[vv^\top]\Vert_{\op},
    \end{align*}
    and undoing the vectorization operation gives us that,
    $$
    \mathbb{E}[\Vert G_M^2\Vert_{\mathrm{op}}] = M\mathbb{E}[\Vert G_i \Vert_F^2] \leq M\sigma^2
    $$
    Putting this all together, we see that
    \begin{align*}
    \Prob\left(\Vert g(K) - \nabla J_{\mathrm{DR}}(K) \Vert_{\mathrm{F}} \geq \varepsilon_\nabla\right) &= \Prob\left(\Big\Vert \sum_{i=1}^M G_i \Big\Vert_{\mathrm{op}} \geq \frac{M}{\sqrt{2}}\varepsilon_\nabla\right)\leq 2 (d_xd_u+1)\exp\Big(-\frac{\frac{M^2\varepsilon_\nabla^2}{2}}{2\big(M\sigma^2 + \frac{\bar{G}}{3}\frac{M \varepsilon_{\nabla}}{\sqrt{2}} \big)} \Big)\\
    &= 2 (d_xd_u+1)\exp\Big(-\frac{3\sqrt{2}M\varepsilon_\nabla^2}{4\big(3\sqrt{2}\sigma^2 + \bar{G} \varepsilon_{\nabla} \big)} \Big)
    \end{align*}
    Finally, we can fix a failure probability $\delta$, plug in constants, and rearrange to obtain:
    $$
    M\geq\frac{4\big(3\sqrt{2}\,\sigma^2 + \bar{G} \varepsilon_{\nabla} \big)}{3\sqrt{2}\,\varepsilon_{\nabla}^2}\log\left(\frac{2(n_xn_u+1)}{\delta}\right)
    $$
    \end{proof}
\end{lemma}
\subsection{Polynomial Expressions}
\noindent First, we collect, two expressions from lemma 27 in Fazel et. al. \cite{FGKM18} that are used to establish convergence of zeroth-order gradient methods, adapted to the task at hand.
\begin{equation}\label{eq:r_guard}
    r_g = \inf_{\theta \in \Theta}\!\min\!\left\{\!\frac{\sigma_{\mathrm{min}}(\mathcal{Q}) \sigma_{\mathrm{min}}(\Sigma_w)}{4 J(K^{\gd},\theta) \Vert B \Vert_{\op}(\Vert A\!-\!BK^{\gd} \Vert_{\op} \!+\! 1)},\Vert K^{\gd} \Vert_{\op}\!\right\} 
\end{equation}
\begin{equation}\label{eq:gradient_error_to_cost_error_poly}
\begin{aligned}
L_{\mathrm{cost}}&=4  J_{\mathrm{DR}}(K) \Vert \mathcal{Q}\Vert_{\mathrm{op}} \left(2\Vert K \Vert_{\mathrm{op}} + \frac{1}{4 \big\Vert \bar{\theta}\big\Vert_{\mathrm{op}}} \right)\Vert K - K^\prime\Vert_{\mathrm{op}} \nonumber \\
&\quad + \frac{2^3J_{\mathrm{DR}}(K)^{5/2} \big\Vert \bar{\theta} \big\Vert_{\mathrm{op}}}{\sigma_{\mathrm{min}}(\Sigma_w)^{1/2}\sigma_{\mathrm{min}}(\mathcal{Q})^{1/2} }\left(\big\Vert \bar{\theta} \big\Vert_{\mathrm{op}} \frac{2^{1/2}J_{\mathrm{DR}}(K)^{1/2}}{\sigma_{\min}(\mathcal{Q})^{1/2}\sigma_{\min}(\Sigma_w)^{1/2} } +1\right)
\end{aligned}
\end{equation}
Note the dependence on $\Vert K^\mathrm{gd}\Vert_{\mathrm{op}}$, which is detrimental to the task at hand because we employ the discount annealing scheme that starts at $K = 0$. We slightly adapt the analysis from Fazel et al. \cite{FGKM18} to circumvent this issue, and to avoid any dependence of the sampling size on the current controller.
\begin{lemma}
    Let,
    \begin{equation}\label{eq:c_g_definition}
    \Vert K^\prime - K\Vert_{\mathrm{op}} \leq \frac{\sigma_{\mathrm{min}}(Q) \sigma_{\mathrm{min}}(\Sigma_w)}{4 J_{\mathrm{DR}}(K) \big\Vert \bar{\theta}\big\Vert_{\mathrm{op}}\big(\big\Vert \bar{\theta}\big\Vert_{\mathrm{op}} + 1\big)} \eqqcolon c_g
\end{equation}
then,
\begin{align}
    |J_{\mathrm{DR}}(K^\prime) - J_{\mathrm{DR}}(K)|&\leq 4  J_{\mathrm{DR}}(K) \Vert \mathcal{Q}\Vert_{\mathrm{op}} \left(2\Vert K \Vert_{\mathrm{op}} + \frac{1}{4 \big\Vert \bar{\theta}\big\Vert_{\mathrm{op}}} \right)\Vert K - K^\prime\Vert_{\mathrm{op}} \nonumber \\
&\quad + \frac{2^3J_{\mathrm{DR}}(K)^{5/2} \big\Vert \bar{\theta} \big\Vert_{\mathrm{op}}}{\sigma_{\mathrm{min}}(\Sigma_w)^{1/2}\sigma_{\mathrm{min}}(\mathcal{Q})^{1/2} }\left(\big\Vert \bar{\theta} \big\Vert_{\mathrm{op}} \frac{2^{1/2}J_{\mathrm{DR}}(K)^{1/2}}{\sigma_{\min}(\mathcal{Q})^{1/2}\sigma_{\min}(\Sigma_w)^{1/2} } +1\right) \Vert K-K^\prime \Vert_{\mathrm{op}}\nonumber \\
&\eqqcolon L_{\mathrm{cost}} \Vert K - K^\prime \Vert_{\mathrm{op}} \label{eq:L_cost_definition}
\end{align}
\begin{proof}
\begin{align*} &\Vert P(K',\theta) - P(K,\theta)\Vert_{\mathrm{op}} 
= \left\Vert\mathrm{dlyap}\left(A-BK^\prime, \begin{bmatrix}I\\-K^\prime\end{bmatrix}^\top \mathcal{Q} \begin{bmatrix}I\\-K^\prime\end{bmatrix}\right) - \mathrm{dlyap}\left(A-BK, \begin{bmatrix}I\\-K\end{bmatrix}^\top \mathcal{Q} \begin{bmatrix}I\\-K\end{bmatrix}\right)\right\Vert_{\mathrm{op}}\\
&\qquad\qquad \leq \left\Vert\mathrm{dlyap}\left(A-BK^\prime, \begin{bmatrix}I\\-K^\prime\end{bmatrix}^\top \mathcal{Q} \begin{bmatrix}I\\-K^\prime\end{bmatrix}\right) - \mathrm{dlyap}\left(A-BK, \begin{bmatrix}I\\-K^\prime\end{bmatrix}^\top \mathcal{Q} \begin{bmatrix}I\\-K^\prime\end{bmatrix}\right)\right\Vert_{\mathrm{op}} \\
&\qquad\qquad\qquad +  \left\Vert\mathrm{dlyap}\left(A-BK, \begin{bmatrix}I\\-K\end{bmatrix}^\top \mathcal{Q} \begin{bmatrix}I\\-K\end{bmatrix}\right) - \mathrm{dlyap}\left(A-BK, \begin{bmatrix}I\\-K^\prime\end{bmatrix}^\top \mathcal{Q} \begin{bmatrix}I\\-K^\prime\end{bmatrix}\right)\right\Vert_{\mathrm{op}} \\
&\qquad\qquad\leq 2\left\Vert \mathrm{dlyap}(A-BK,\cdot) \right\Vert_{\mathrm{op}} \left\Vert \begin{bmatrix}I\\-K\end{bmatrix}^\top \mathcal{Q} \begin{bmatrix}I\\-K\end{bmatrix}  - \begin{bmatrix}I\\-K^\prime\end{bmatrix}^\top \mathcal{Q} \begin{bmatrix}I\\-K^\prime\end{bmatrix}\right\Vert_{\mathrm{op}} \\
&\qquad\qquad\qquad + 2\left\Vert \mathrm{dlyap}(A-BK,\cdot) \right\Vert_{\mathrm{op}}^2 \Vert (A-BK)(\cdot)(A-BK) - (A-BK^\prime)(\cdot)(A-BK^\prime) \Vert_{\mathrm{op}}\\
&\qquad\qquad\qquad\quad \cdot \left\Vert \begin{bmatrix}I\\-K\end{bmatrix}^\top \mathcal{Q} \begin{bmatrix}I\\-K\end{bmatrix}\right\Vert_{\mathrm{op}} \\
&\qquad\qquad\leq 2\left\Vert \mathrm{dlyap}(A-BK,\cdot) \right\Vert_{\mathrm{op}} (2\Vert K \Vert_{\mathrm{op}} \Vert \mathcal{Q}\Vert_{\mathrm{op}}\Vert K - K^\prime\Vert + \Vert \mathcal{Q}\Vert_{\mathrm{op}} \Vert K - K^\prime\Vert^2) \\
&\qquad\qquad\qquad + 2\left\Vert \mathrm{dlyap}(A-BK,\cdot) \right\Vert_{\mathrm{op}}^2 \big\Vert \begin{bmatrix}A & B\end{bmatrix} \big\Vert_{\mathrm{op}}\left(\big\Vert \begin{bmatrix}A & B\end{bmatrix} \big\Vert_{\mathrm{op}} \left\Vert \begin{bmatrix}I\\-K\end{bmatrix}\right\Vert_{\mathrm{op}}+1\right)\Vert K \Vert_{\mathrm{op}} \Vert K-K^\prime \Vert_{\mathrm{op}}\\
&\qquad\qquad\leq 2 J(K,\theta) \Vert \mathcal{Q}\Vert_{\mathrm{op}} \left(2\Vert K \Vert_{\mathrm{op}} + \frac{\sigma_{\mathrm{min}}(Q) \sigma_{\mathrm{min}}(\Sigma_w)}{4 J(K,\theta) \big\Vert \!\begin{bmatrix}A & B\end{bmatrix} \!\big\Vert_{\mathrm{op}}\big(\big\Vert\! \begin{bmatrix}A & B\end{bmatrix}\! \big\Vert_{\mathrm{op}} + 1\big)}\right)\Vert K - K^\prime\Vert_{\mathrm{op}} \\
&\qquad\qquad\qquad + 2J(K,\theta)^2 \big\Vert \begin{bmatrix}A & B\end{bmatrix} \big\Vert_{\mathrm{op}}\left(\big\Vert \begin{bmatrix}A & B\end{bmatrix} \big\Vert_{\mathrm{op}} \Big(\frac{J(K,\theta)^{1/2}}{\sigma_{\min}(\mathcal{Q})^{1/2}\sigma_{\min}(\Sigma_w)^{1/2} } +1\Big) \right)\Vert K \Vert_{\mathrm{op}} \Vert K-K^\prime \Vert_{\mathrm{op}}\\
&\qquad\qquad\leq 2 J(K,\theta) \Vert \mathcal{Q}\Vert_{\mathrm{op}} \left(2\Vert K \Vert_{\mathrm{op}} + \frac{1}{4 \big\Vert \!\begin{bmatrix}A & B\end{bmatrix} \!\big\Vert_{\mathrm{op}}} \right)\Vert K - K^\prime\Vert_{\mathrm{op}} \\
&\qquad\qquad\qquad + 2J(K,\theta)^2 \big\Vert \begin{bmatrix}A & B\end{bmatrix} \big\Vert_{\mathrm{op}}\left(\big\Vert \begin{bmatrix}A & B\end{bmatrix} \big\Vert_{\mathrm{op}} \Big(\frac{J(K,\theta)^{1/2}}{\sigma_{\min}(\mathcal{Q})^{1/2}\sigma_{\min}(\Sigma_w)^{1/2} } +1\Big) \right)\Vert K \Vert_{\mathrm{op}} \Vert K-K^\prime \Vert_{\mathrm{op}}
\end{align*}
Recall that,
$$
\Vert K \Vert_{\mathrm{op}} \leq \left\Vert\begin{bmatrix}I\\-K\end{bmatrix}\right\Vert_{\mathrm{op}} \leq \frac{J(K,\theta)^{1/2}}{\sigma_{\mathrm{min}}(\Sigma_w)^{1/2}\sigma_{\mathrm{min}}(\mathcal{Q})^{1/2}}
$$
Minor simplifications yields the result. 
\end{proof}
\end{lemma}
\subsection{Additional Experimental Details}
\noindent We use the following dynamical system model for the cart-pole system:
\begin{align*}
\dot{x} &= \dot{x}\\
\ddot{x} & =\frac{m_pg \sin\theta \cos\theta
- \frac{7}{3} \!\left[u + m_p \hat{l} \dot{\theta}^2 \sin\theta - \mu_c \dot{x} \right] - \frac{\mu_p \dot{\theta} \cos\theta}{\hat{l}}}{m_p \cos^2\theta - \frac{7}{3} (m_p+m_c)}\\
\dot{\theta} &= \dot{\theta}\\
\ddot{\theta} &= \frac{3}{7 \hat{l}}
\left( g \sin\theta - \ddot{x} \cos\theta - \frac{\mu_p \dot{\theta}}{m \hat{l}}\right).
\end{align*}
where the sign convention is, a positive sign attached to $x$ corresponds to a right displacement, $\theta = 0$ corresponds to upright position, and a  clockwise rotation for the pendulum is positive. Further, a positive sign attached to $u$, corresponds to rightward push, and this dynamical system is linearized by automatic differentiation about $\boldsymbol{x} = [x,\dot{x}, \theta,\dot{\theta}] = 0$.
\begin{table}[h!]
\centering
\caption{Parameters for the Cartpole System}
\begin{tabular}{lll}
\hline
\textbf{Parameter} & \textbf{Description} & \textbf{Default Value} \\
\hline
$u$ & Force applied (Newtons) & --- \\
$m_c$ & Mass of cart & 1.0 kg\\
$m_p$ & Mass of pole & 0.1 kg\\
$l$ & Length of pole  & 0.2--0.8 m\\
$g$ & Gravitational acceleration  & 9.81 (m/s\textsuperscript{2})\\
$\mu_c$ & Friction coefficient between cart and track & 0.25 \\
$\mu_p$ & Friction coefficient at pendulum joint & 0.01 \\
$\boldsymbol{x}$ & State vector $[x, \dot{x}, \theta, \dot{\theta}]$ & --- \\
\end{tabular}
\end{table}
\newpage

\end{document}